\newcommand{\remove}[1]{}
\newtheorem{thm}{Theorem}
\newtheorem{lem}{Lemma}
\newtheorem{rem}{Remark}
\begin{document}

\title{Measurement Based Impromptu Deployment of a Multi-Hop Wireless Relay Network \thanks{This 
work was supported by the Indo-French Centre for the Promotion of Advanced Research (Project 4900-ITB), the 
Department of Electronics and Information Technology, via the Indo-US PC3 project, and the Department of Science and 
Technology (DST), via the J.C. Bose Fellowship.}}


\newcounter{one}
\setcounter{one}{1}
\newcounter{two}
\setcounter{two}{2}



\author{
Arpan~Chattopadhyay$^\fnsymbol{one}$, Marceau~Coupechoux$^\fnsymbol{two}$, and Anurag~Kumar$^\fnsymbol{one}$\\
\parbox{0.49\textwidth}{\centering $^\fnsymbol{one}$Dept. of ECE, Indian Institute of Science\\
Bangalore 560012, India\\
arpanc.ju@gmail.com, anurag@ece.iisc.ernet.in}
\hfill
\parbox{0.49\textwidth}{\centering $^\fnsymbol{two}$Telecom ParisTech and CNRS LTCI \\
Dept. Informatique et R\'eseaux\\
23, avenue d'Italie, 75013 Paris, France\\
marceau.coupechoux@telecom-paristech.fr}
}

\maketitle
\thispagestyle{empty}


\begin{abstract}
We study the problem of optimal sequential (``as-you-go") deployment
  of wireless relay nodes, as a person walks along a line of random
  length (with a known distribution).  The objective is to create an
  impromptu multihop wireless network for connecting a packet source
  to be placed at the end of the line with a sink node located at the
  starting point, to operate in the light traffic regime.  In walking
  from the sink towards the source, at every step, measurements yield
  the transmit powers required to establish links to one or more
  previously placed nodes. Based on these measurements, at every step,
  a decision is made to place a relay node, the overall system
  objective being to minimize a linear combination of the expected
  sum power (or the expected maximum power) required to deliver a
  packet from the source to the sink node and the expected number of
  relay nodes deployed. For each of these two objectives, two
  different relay selection strategies are considered: (i) each relay
  communicates with the sink via its immediate previous relay, (ii)
  the communication path can skip some of the deployed relays.  With
  appropriate modeling assumptions, we formulate each of these
  problems as a Markov decision process (MDP).  We provide the optimal
  policy structures for all these cases, and provide illustrations of
  the policies and their performance, via numerical results, for some
  typical parameters.
\end{abstract}

\vspace{-1 mm}

\section{Introduction}
\label{Introduction}

\vspace{-1 mm}

Wireless interconnection of resource-constrained mobile user devices or wireless sensors to
the wireline infrastructure via relay nodes is an important
requirement, since a direct one-hop link from the source node to
the infrastructure ``base-station" may not always be feasible, due to distance or poor channel condition. 
The relays could be battery operated
radio routers or other wireless sensors in the wireless sensor network context, or other users' devices 
in the cellular context. The relays are also resource constrained and a cost might be
involved in engaging or placing them. Hence, there arises the
problem of {\em optimal relay placement}.

Motivated by the above larger problem, we consider the problem of ``as-you-go" deployment of relay nodes
along a line, between a sink node and a source node (see
Figure~\ref{fig:line-network-general}), where the deployment operative
starts from the sink node, places relay nodes along the line, and
places the source node where the line ends.  The problem is motivated
by the need for impromptu deployment of wireless networks by ``first
responders," for situation monitoring in an emergency such as a
building fire or a terrorist siege. Such problems can also arise when
deploying wireless sensor networks in large difficult terrains (such
as forests) where it is difficult to plan a deployment due to the
unavailability of a precise map of the terrain, or when such networks
need to be deployed and redeployed quickly and there is little time in
between to plan, or in situations where the deployment needs to be
stealthy (for example, when deploying sensor networks for detecting
poachers or fugitives in a forest). 

\begin{figure}[!t]
\centering
\includegraphics[scale=0.32]{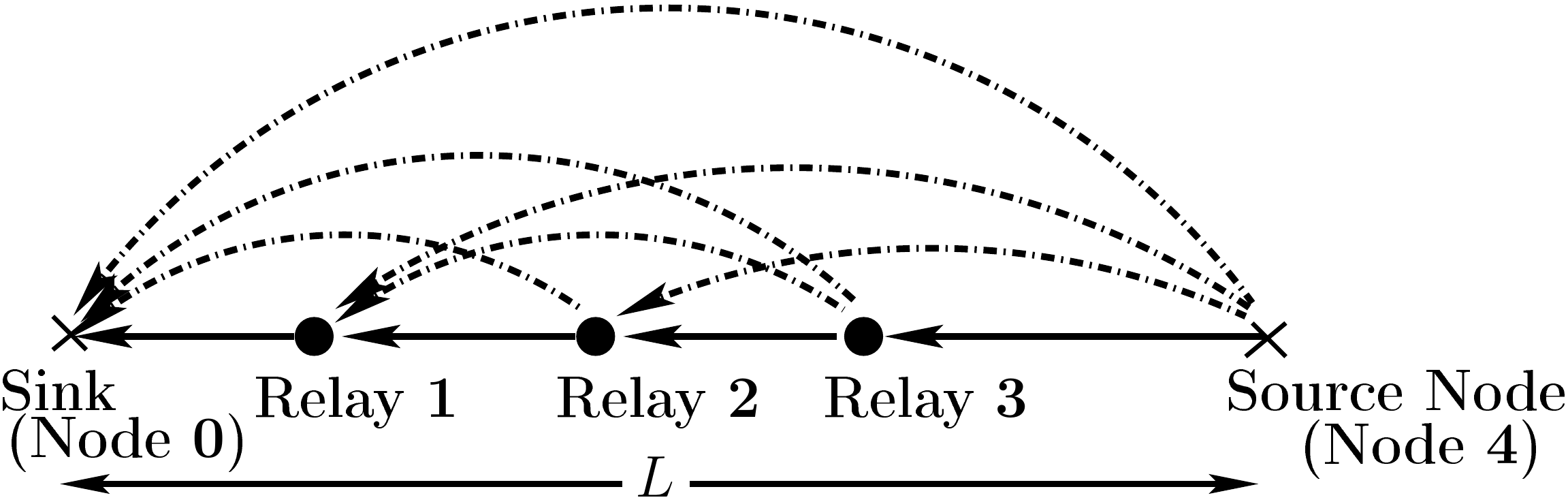}
\vspace{-2mm}
\caption{A line network with one source, one sink and three relays.}
\label{fig:line-network-general}
\vspace{-10mm}
\end{figure}

In this paper, we consider the problem of
as-you-go deployment along a line of unknown random length, $L$, whose
distribution is known.  The transmit power required to establish a
link (of a certain minimum quality) between any two nodes is modeled
by a random variable capturing the effect of path-loss and
shadowing. There is a cost for placing a relay, and the communication
cost of a deployment is measured as some function of the powers
required to communicate over the links. We consider two performance
measures: the sum-power and the max-power along the path from the
source to the sink, under two different relay selection strategies:
(i) each relay communicates with the sink via its immediate previous
relay, (ii) the communication path can skip some of the deployed
relays. Under certain assumptions on the distribution of $L$, and the
powers required at the relays, we formulate each of the sequential
placement problems as a total cost Markov decision process (MDP).

The optimal policies for various MDPs formulated in this paper turn
out to be {\em threshold} policies; the decision to place a relay at a
given location involves the power required to establish a link to one
or more previous nodes, and the distance to one or more previous nodes
(depending on the objective and the relay selection strategy). Our
analysis and numerical work also suggest that allowing the possibility
of skipping some of the deployed relays may result in a reduction in
the total cost.

\vspace{-3 mm}
  
\subsection{Related Work}

\vspace{-1 mm}

There has been increasing interest 
in the research community to explore the impromptu relay placement problem in recent years.  
Howard et al., in \cite{howard-etal02incremental-self-deployment-algorithm-mobile-sensor-network},
provide heuristic algorithms for incremental deployment of
sensors with the objective of
covering the deployment area. Souryal et al., in \cite{souryal-etal07real-time-deployment-range-extension}, address the
problem of impromptu deployment of static wireless networks with an extensive study of indoor RF link quality variation.
The work reported in \cite{aurisch-tlle09relay-placement-emergency-response} use similar approach for relay deployment. 
Recently, Liu et al. (\cite{liu-etal10breadcrumb}) describe a {\em breadcrumbs} system for aiding firefighters inside buildings. 
However, there has been little effort to rigorously formulate the problem in order to derive optimal policies, 
from which insights can be gained, and which can be compared in performance to reasonable heuristics. 
Recently, Sinha et al. (\cite{sinha-etal12optimal-sequential-relay-placement-random-lattice-path}) 
have provided an MDP formulation for establishing a multi-hop 
network between a destination and an unknown source location
by placing relay nodes along a random lattice path. They assume a given deterministic mapping between power and wireless 
link length, and, hence, do not consider the statistical variability (due to shadowing) of the transmit power 
required to maintain the link quality over links of a given length. We, however, 
consider such variability, and therefore bring in the idea of measurement based impromptu placement.

\vspace{-2 mm}

\subsection{Organization}

 In Section \ref{sec:system_model_and_notation}, the system model and the basic notation used in this work are discussed.

In Section \ref{sec:only_adjacent_nodes}, the problem of sequential relay placement is addressed, under the assumption 
that a packet originating from the source makes a hop-by-hop traversal through all 
relay nodes. We formulate the problems with sum power and max-power objectives 
as MDPs and establish the optimal policy structures analytically. We show that that, in each case, the decision to place or not to place at the current position depends on a comparison of the transmit 
power for establishing a link from the current position, with a threshold that depends on the state of the Markov decision process.

In Section \ref{sec:not_only_adjacent_nodes}, we again address the same problems as in 
Section \ref{sec:only_adjacent_nodes}, but we relax the restriction that the links of the path from the 
source to the sink must be between adjacent deployed relays. 
This relaxation leads us to the formulation of MDPs with a more complicated state space. The optimal 
policies again turn out to be threshold policies. We provide numerical examples, 
using parameters similar to those that occur in commercially available wireless sensor networking devices. 
The performance improvement due to skipping relays is demostrated.

 We conclude in Section \ref{sec:conclusion}.

\section{System Model and Notation}\label{sec:system_model_and_notation}
\vspace{-2 mm}

\subsection{Length of the Line}\label{subsection:length_of_the_line}
The length $L$ of the line is a priori unknown, but there is prior 
information (e.g., the mean length $\overline{L}$) that leads us to model $L$ as a geometrically 
distributed number of steps. The step length $\delta$ 
(whose values will typically be several meters, e.g., $2$~meters in our numerical work) 
and the mean length of the line, $\overline{L}$, can be used 
to obtain the parameter of the geometric distribution, i.e., the probability $\theta$ that the line ends at the 
next step. In the problem formulation, we assume $\delta=1$ for simplicity.\footnote{The geometric distribution 
is the maximum entropy discrete probability 
mass function with a given mean. Thus, by using the geometric distribution, 
we are leaving the length of the line as uncertain as we can, given the prior knowledge of $\overline{L}$.} 
All distances are assumed to be integer multiples of $\delta$.


\subsection{Deployment Process and Some Notation}\label{subsection:deployment_process_notation}
\vspace{-1mm}

As the person walks along the line, at each step he measures the link quality from the 
current location to one or more than one previous node and accordingly decides whether to place a relay at the current 
location or not. After the deployment process is complete (at the end of the line where the source is placed), we denote the 
number of deployed relays by $N$, which is a random number, with the randomness coming 
from the randomness in the link qualities and in the length of the line. 
As shown in Figure~\ref{fig:line-network-general}, the sink is called Node $0$, the relay 
closest to the sink is called Node $1$, and finally the source is called Node $(N+1)$. 
The link whose transmitter is Node $i$ and receiver is Node 
$j$ is called link $(i,j)$. A generic link is denoted by $e$.

\vspace{-3.5 mm}
\subsection{Traffic Model}\label{subsection:traffic_model}
\vspace{-1.5 mm}
We consider a traffic model where the traffic is so
low that there is only one packet in the network at a time; we call
this the ``lone packet model.''  As a consequence of this assumption,
(i) the transmit power required over each link depends only on the
propagation loss over that link, as there are no simultaneous
transmissions to cause interference, and (ii) the transmission delay
over each link is easily calculated, as there are no simultaneous
transmitters to contend with. This permits us to easily write down the
communication cost on a path over the deployed relays.

It was shown in \cite{bhattacharya-kumar12qos-aware-survivable-network-design} that a network operating under 
CSMA/CA medium access, and designed for carrying any positive traffic, with some QoS (in terms of the 
packet delivery probability), must  necessarily be able 
to provide the desired QoS to lone packet traffic.   
As-you-go deployment of wireless networks 
while meeting QoS objectives for specific positive packet arrival rates is a topic of our ongoing research.

\subsection{Channel Model}\label{subsection:channel_model}

For our network performance objective (see Section~\ref{subsection:objective}), we need the transmit power required to sustain a 
certain quality of communication over a link.  In order to model this required power, we consider the usual 
aspects of path-loss, shadowing, and fading. A link is considered to be in \emph{outage} if  the received 
signal strength drops below $P_{rcv-min}$ (due to fading) (e.g., below -88~dBm). The transmit power that 
we use is such that the probability of outage is less than a small value (say 5\%).   For a generic link 
of length $r$, we denote by $\Gamma_{r}$ the transmit power required; due to shadowing, this is modeled 
as a random variable.  Since practical radios can only be set to transmit at a finite set of power levels, 
the random variable $\Gamma_{r}$ takes values from a discrete set, $\mathcal{S}$. The distribution function 
of $\Gamma_r$ is denoted by  $G_r(\cdot)$, and the probability mass function (p.m.f.) by $g(r, \cdot)$, i.e., 
$g(r,\gamma):=P(\Gamma_r=\gamma)$ for all $ \gamma\in \mathcal{S}$; $g(r,\gamma)$ is the probability 
that at least the transmit power level $\gamma$ is required to establish a link of length $r$. 
Since the required transmit power increases with distance, 
we assume that $\{G_r\}_{r=1,2,...}$ is a sequence of distributions {\em stochastically increasing} 
(for definition, see~\cite{shaked-shanthikumar06stochastic-orders}) in $r$. 
We also need to talk about a specific link, say $e$; 
we will denote the transmit power required for this link by $\Gamma^{(e)}$.
We assume that the powers required to establish any two different links in the network 
are independent, i.e., $\Gamma^{(e_1)}$ is independent of $\Gamma^{(e_2)}$ for $e_1 \neq e_2$. 
Spatially correlated shadowing will be
considered in our future work.

\subsection{Deployment Objective}\label{subsection:objective}
In this paper, we do not consider the possibility of
another person following behind, who can learn from the measurements
and actions of the first person, thereby supplementing
the actions of the preceding individual.  Our objective is to design
relay placement policies so as to minimize the sum of the expected sum
power/ maximum power (to deliver a packet from the
source node to the sink node) and the expected cost of placing the
relays (the expected number of relays multiplied by the relay cost,
$\xi$).  By a standard constraint relaxation, this problem also arises
from the problem of minimizing the expected sum/ max
power, subject to a constraint on the mean number of relays.  Such a
constraint can be justified if we consider relay deployment for
multiple source-sink pairs over several different lines of mean length
$\overline{L}$, given a large pool of relays, and we are only
interested in keeping small the total number of relays over all these
deployments.

The max-power objective is a valid one in a typical sensor network setting since each of the battery-operated relays must 
use as little power as possible in order to maximize the network lifetime. The sum-power objective may be useful in a different 
scenario. Consider a mobile station (MS) trying to establish a {\em multihop} connection to 
a base station (BS) at an unknown distance in order to download data, 
and there is a continuum of other 
nodes between them. Each node can be used as a relay only if it is paid a certain price.  
The price may have a fixed component (corresponds to the cost $\xi$ of a relay) and a variable component proportional to the 
power used by the relay to serve the MS. The MS could send out a probe towards the BS; the probe needs to 
sequentially establish a multihop path using other nodes along the path as relays. 
The formulation for this problem will be analogous to that for the sequential relay 
placement problem with the sum-power objective. Also, in the context of global energy saving, 
it is interesting to have relay deployment policies that minimize the sum power.

Note that our problem formulation is applicable to the situation in which a relay can be set to a low 
power state except when it has to receive or transmit. If the relays always keep their
receivers on, with the current drawn from the battery in the
receiving mode being the same as the current required at the transmit
mode, then the battery lifetime will depend on the current
drawn in the receive mode, since for light traffic the node
will transmit rarely. Also note that, our formulation
is capable of using any monotonically increasing function of
the power at a node as the objective to be optimized, rather
than directly using the sum/max power objective. The function
could denote the current requirement for a particular transmit
power, which will in turn govern the lifetime.

\subsection{Routing over the Deployed Relays}\label{subsection:routing_protocol}
After node deployment, the routes could be constrained
so as to allow transmissions only between adjacent nodes, i.e., the
routes use solely the links represented by the solid lines in
Figure~\ref{fig:line-network-general}; we consider this problem in
Section~\ref{sec:only_adjacent_nodes}.
However, after deployment, it may turn out that it is better that the
route from the source to the sink skips some relays (e.g., in
Figure~\ref{fig:line-network-general}, if the channel between the
source node and relay~$2$ is very good, it could be better to directly
transmit from the source node to relay~$2$ without using relay
$3$). Hence, while formulating the problem, it would be beneficial to
permit the possibility that some of the dotted links in
Figure~\ref{fig:line-network-general} can be used after deployment;
this problem is solved in Section~\ref{sec:not_only_adjacent_nodes}. 


\section{Relaying via Adjacent Previous Node Only}
\label{sec:only_adjacent_nodes}

In this section we allow relaying from the source to the sink only by each relay passing 
the packet to the immediate previous relay, in the order of deployment. Thus, this is the measurement-based 
extension to the problem in \cite{mondal-etal12impromptu-deployment_NCC}.

\vspace{1mm}

\subsection{Sum-Power Objective}\label{subsec:sum_power_adjacent}

\vspace{1mm}
\subsubsection{Problem Formulation}\label{subsubsec:formulation_sum_power_adjacent}

Our problem is to place the relay nodes sequentially such that the expected sum of the total power cost and the relay cost 
is minimized. We formulate this problem as an MDP with state $(r,\gamma)$, where $r$ is the distance of the current location 
from the previous node and and $\gamma$ is the transmit power required to establish a link to the previous node from the current 
location. Based on $(r,\gamma)$ a decision is made whether to place a relay at the current position or not. $\mathbf{0}$ denotes 
the state at the beginning of the process (at the sink node). 
When the source is placed, the process terminates and the system 
enters and stays forever at a state $\mathbf{e}$.  
The action space is $\{\textit{place}, \, \textit{do not place}\}$. The randomness comes from the random length $L$ and 
the randomness in $\Gamma^{(e)}$.

The problem we seek to solve is:
\begin{eqnarray}
 \min_{\pi \in \Pi} \mathbb{E}_{\pi} \bigg(\sum_{i=1}^{N+1}\Gamma^{(i,i-1)}+\xi N \bigg)  
\label{eqn:unconstrained_total_power_problem} 
\end{eqnarray}
where $\Pi$ is the set of all stationary deterministic Markov placement policies and $\pi$ is a specific stationary deterministic 
Markov placement policy. Any deterministic Markov policy $\pi$ is a sequence of 
mappings $\{\mu_k\}_{k \geq 1}$, where $\mu_k$ takes the state of the system at time $k$ (the $k$-th step from the sink, in 
the context of our problem)
and maps it into any one of the two actions 
$\{\textit{place}, \, \textit{do not place}\}$. If $\mu_k$ does not depend on $k$, then the policy is called 
\textit{stationary policy}. By proposition $1.1.1$ of \cite{bertsekas07dynamic-programming-optimal-control-2}, 
we can restrict ourselves to the class of randomized Markov policies. 
The justification for restriction to stationary deterministic 
policies will be given later.

Solving (\ref{eqn:unconstrained_total_power_problem}) 
also helps in solving the following constrained problem (see \cite{altman99constrained-mdp}):
\begin{eqnarray}
 \min_{\pi \in \Pi} \mathbb{E}_{\pi} \bigg(\sum_{i=1}^{N+1}\Gamma^{(i,i-1)}  \bigg) \text{ such that } \mathbb{E}_{\pi}N \leq M, 
\label{eqn:constrained_total_power_problem} 
\end{eqnarray}
where $M$ is a constraint on the mean number of relays deployed. 
In this paper, however, we consider only the unconstrained problem.
 
If the state is $(r,\gamma)$ and a relay is placed, the relay cost $\xi$ and the power cost $\gamma$ is incurred 
at that step. We do not count 
the price of the source node, but include the power used by the source in our cost. {\em No cost is incurred if we do not place a 
relay at a certain location.} 
Note that $\mathbf{0}$ also denotes 
the state immediately after placing a relay, 
since the process regenerates whenever a relay is placed (this follows from the memoryless property 
of geometric distribution and the independence of $\Gamma^{(i,j)}$ and $\Gamma^{(k,l)}$ for $(i,j) \neq (k,l)$).  
Let us define $J_{\xi}(r, \gamma)$ and $J_{\xi}(\mathbf{0})$ to be the optimal expected cost-to-go starting from 
state $(r, \gamma)$ and state $\mathbf{0}$ respectively. 
\vspace{-1mm}
\subsubsection{Bellman Equation}\label{subsubsec:bellman_sum_power_adjacent}
Here we have an infinite horizon total cost MDP with a 
countable state space and finite action space. The assumption P of Chapter~$3$ in \cite{bertsekas07dynamic-programming-optimal-control-2} is 
satisfied here, since the single-stage costs are nonnegative.
 Hence, by Proposition $3.1.1$ of \cite{bertsekas07dynamic-programming-optimal-control-2}, 
the optimal value function $J_{\xi}(\cdot)$ satisfies the following Bellman equation:
\begin{eqnarray}
 J_{\xi}(r,\gamma)&=&\min \bigg\{ \overbrace{\xi+ \gamma + J_{\xi}(\mathbf{0})}^{c_p} , \nonumber\\
&& \overbrace{\theta \mathbb{E} (\Gamma_{r+1})+(1-\theta) \mathbb{E} J_{\xi}\left(r+1, \Gamma_{r+1}\right)}^{c_{np}} \bigg\} \nonumber\\
 J_{\xi}(\mathbf{0})&=&\theta \mathbb{E} (\Gamma_{1}) + (1-\theta) \mathbb{E} J_{\xi} \left(1, \Gamma_{1}\right)\label{eqn:bellman_sum_power}
\end{eqnarray}
$c_p$ in (\ref{eqn:bellman_sum_power}) is the cost of placing a relay
at the state $(r,\gamma)$, and $c_{np}$ is the cost of not placing a relay. 

If the current state is $(r,\gamma)$ and the line has not ended yet, we can take either of the two actions. 
If we place a relay, a cost 
$(\xi+\gamma)$ is incurred; another cost $J_{\xi}(\mathbf{0})$ is also incurred since 
the decision process regenerates at the point. If we do not 
place a relay, the line will end with probability $\theta$ in the next step, in which case a cost $\mathbb{E} (\Gamma_{r+1})$ will be 
incurred. If the line does not end in the next step, the next state will be $(r+1,\gamma')$ where $\gamma' \sim G_{r+1}$ 
and a mean cost of 
$\mathbb{E} J_{\xi} (r+1, \Gamma_{r+1})=\sum_{\gamma}g(r+1,\gamma)J_{\xi} (r+1,\gamma)$ will be incurred.
Note that it is never optimal to place a relay at state $\mathbf{0}$. If it were so, then we would have placed infinitely many relays 
at the sink, leading to infinite relay cost. But if we place one relay at each step until the line ends, the expected cost will 
be less than $(\frac{1}{\theta×}+1)(\xi+\mathbb{E}(\Gamma_{1}))<\infty$. Hence, the optimal action at state $\mathbf{0}$ 
would be to move to the next step without placing the relay. In the next step the line ends with probability $\theta$, in which case 
a cost $\mathbb{E} (\Gamma_{1})$ is incurred. If the line does not end in the next step, the next state will be $(1,\gamma)$ where 
$\gamma \sim G_{1}$.

{\em Justification for restricting to the class of stationary, deterministic policies:} From (\ref{eqn:bellman_sum_power}), 
we see that for each state, there is one action from the set of actions that achieves the minimum in the 
Bellman equation. Hence, by Proposition~$3.1.3$ of \cite{bertsekas07dynamic-programming-optimal-control-2}, 
we have a stationary deterministic optimal policy. Hence, we can 
focus on the class of stationary, deterministic policies.

\subsubsection{Value Iteration}\label{subsubsec:value_iteration_sum_power_adjacent} The 
value iteration for (\ref{eqn:unconstrained_total_power_problem}) is given by: 
\vspace{-2mm}
\begin{eqnarray}
J_{\xi}^{(k+1)}(\mathbf{0})&=&\theta \mathbb{E} (\Gamma_{1}) + (1-\theta) \mathbb{E} J_{\xi}^{(k)} \left(1, \Gamma_{1}\right)\nonumber\\
 J_{\xi}^{(k+1)}(r,\gamma)&=&\min \bigg\{ \xi+ \gamma + J_{\xi}^{(k)}(\mathbf{0}), \theta \mathbb{E} (\Gamma_{r+1}) \nonumber\\
&+& (1-\theta) \mathbb{E} J_{\xi}^{(k)}\left(r+1, \Gamma_{r+1}\right)  \bigg\} \label{eqn:value_iteration_sum_power}
\end{eqnarray}
where $J_{\xi}^{(0)}(r,\gamma)=0$ for all $r, \gamma$ and $J_{\xi}^{(0)}(\mathbf{0})=0$.

\begin{lem}\label{lemma:convergence_value_iteration_sum_power}
 The value iteration (\ref{eqn:value_iteration_sum_power}) provides a nondecreasing sequence of iterates that converges to the  
optimal value function, i.e., 
$J_{\xi}^{(k)}(r,\gamma) \uparrow J_{\xi}(r,\gamma)$ for all $r,\gamma$, and 
$J_{\xi}^{(k)}(\mathbf{0}) \uparrow J_{\xi}(\mathbf{0})$ as $k \uparrow \infty$.
\end{lem}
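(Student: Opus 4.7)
The plan is to view the value iteration (\ref{eqn:value_iteration_sum_power}) as repeated application of a Bellman operator $T$, defined for nonnegative functions $J$ on $\{\mathbf{0}\} \cup (\mathbb{N} \times \mathcal{S})$ by the right-hand sides of (\ref{eqn:bellman_sum_power}); then $J_{\xi}^{(k+1)} = T J_{\xi}^{(k)}$. Since every single-stage cost in the MDP ($\xi$, $\gamma \in \mathcal{S}$, and $\mathbb{E}\Gamma_{r+1}$) is nonnegative, Assumption~P of Chapter~3 of \cite{bertsekas07dynamic-programming-optimal-control-2} holds, and the proof reduces to two standard facts: monotonicity of $\{J_{\xi}^{(k)}\}$ in $k$, and identification of its limit with $J_{\xi}$.

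First I would verify that $T$ is monotone: if $J' \geq J$ pointwise, then the ``place'' branch $\xi + \gamma + J'(\mathbf{0}) \geq \xi + \gamma + J(\mathbf{0})$, and the ``do-not-place'' branch $\theta \mathbb{E}(\Gamma_{r+1}) + (1-\theta)\mathbb{E} J'(r+1, \Gamma_{r+1}) \geq \theta \mathbb{E}(\Gamma_{r+1}) + (1-\theta) \mathbb{E} J(r+1, \Gamma_{r+1})$, because the expectation is linear and all quantities involved are nonnegative; since $\min$ preserves monotonicity, $TJ' \geq TJ$. Because $J_{\xi}^{(0)} \equiv 0$ and $J_{\xi}^{(1)} = T\,0 \geq 0$, induction on $k$ then gives $J_{\xi}^{(k)} \leq J_{\xi}^{(k+1)}$ for every $k$.

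Next I would show $J_{\xi}^{(k)} \leq J_{\xi}$ for all $k$ by induction. The base case is $J_{\xi}^{(0)} = 0 \leq J_{\xi}$, since $J_{\xi}$ is the infimum of sums of nonnegative quantities and hence nonnegative. For the induction step, $J_{\xi}$ is a fixed point of $T$ by (\ref{eqn:bellman_sum_power}), so monotonicity of $T$ gives $J_{\xi}^{(k+1)} = T J_{\xi}^{(k)} \leq T J_{\xi} = J_{\xi}$.

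Combining the two steps, $\{J_{\xi}^{(k)}\}$ is nondecreasing and bounded above by $J_{\xi}$, so the pointwise limit $J_{\xi}^{\infty}$ exists and satisfies $J_{\xi}^{\infty} \leq J_{\xi}$. The main (and only nontrivial) obstacle is the reverse inequality $J_{\xi}^{\infty} \geq J_{\xi}$. For this I would appeal directly to Proposition~3.1.2 of \cite{bertsekas07dynamic-programming-optimal-control-2}, which under Assumption~P asserts that value iteration initialized at $0$ converges to the optimal value function; the hypotheses are satisfied since the costs are nonnegative. An alternative, more self-contained route is to pass to the limit $k \to \infty$ inside the expectations in (\ref{eqn:value_iteration_sum_power}) — justified by monotone convergence, and immediate in our setting because $\mathcal{S}$ is finite and so each expectation is a finite sum — to conclude that $J_{\xi}^{\infty}$ is itself a nonnegative solution of the Bellman equation (\ref{eqn:bellman_sum_power}), and then invoke the minimality of $J_{\xi}$ among nonnegative fixed points of $T$ under Assumption~P. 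Either route yields $J_{\xi}^{(k)} \uparrow J_{\xi}$, completing the proof.
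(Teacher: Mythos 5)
Your proof is correct in substance, but it takes a more self-contained route than the paper, whose entire argument is a citation: the MDP has countable state space, finite action space and nonnegative single-stage costs, so Assumption~P of Chapter~3 of \cite{bertsekas07dynamic-programming-optimal-control-2} holds, and the lemma follows by combining Propositions~3.1.5 and~3.1.6 there. Your monotone-operator argument (monotonicity of $T$, induction from $J_{\xi}^{(0)}\equiv 0$ giving $J_{\xi}^{(k)}\uparrow J_{\xi}^{\infty}\leq J_{\xi}$, passing to the limit through the two-element $\min$ and through the expectations by monotone convergence, then invoking minimality of $J_{\xi}$ among nonnegative solutions of the Bellman equation) essentially reproves those propositions in this special case; it buys transparency and makes visible exactly where the finiteness of the action set and the nonnegativity of costs enter, at the cost of length.

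One caveat: your first route is shaky as stated. Proposition~3.1.2 of \cite{bertsekas07dynamic-programming-optimal-control-2} is the extremality result (any $\tilde J\geq 0$ with $\tilde J\geq T\tilde J$ dominates the optimal cost), not a value-iteration convergence theorem, and under Assumption~P alone value iteration started from $0$ need \emph{not} converge to the optimal value function; convergence requires an extra hypothesis such as finiteness (or a compactness condition) of the action sets. That hypothesis does hold here (two actions per state), and it is precisely what the propositions cited by the paper exploit, so the conclusion is unaffected, but you should either correct the citation and state explicitly that the finite action space is being used, or simply rely on your second route, which is complete: the interchange of limit and $\min$ is trivial for two actions, monotone convergence justifies the interchange with the expectations (sums over the discrete set $\mathcal{S}$), hence $J_{\xi}^{\infty}$ is a nonnegative solution of (\ref{eqn:bellman_sum_power}), and the extremality result (which \emph{is} Proposition~3.1.2) then yields $J_{\xi}^{\infty}\geq J_{\xi}$, closing the argument.
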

\begin{proof}
 See Appendix \ref{appendix:only_adjacent_nodes}.
\end{proof}

\subsubsection{Policy Structure}\label{subsubsec:policy_structure_sum_power_adjacent}

\begin{lem}\label{lem:properties_value_function_sum_power}
 $J_{\xi}(r, \gamma)$ is concave, increasing in $\gamma$ and $\xi$ and also increasing in $r$. $J_{\xi}(\mathbf{0})$ 
is concave, increasing in $\xi$.
\end{lem}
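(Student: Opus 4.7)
The plan is to proceed by induction on the value-iteration index $k$. By Lemma~\ref{lemma:convergence_value_iteration_sum_power}, $J_\xi^{(k)} \uparrow J_\xi$ and $J_\xi^{(k)}(\mathbf{0}) \uparrow J_\xi(\mathbf{0})$ pointwise; since pointwise (monotone) limits of non-decreasing functions are non-decreasing, and pointwise limits of concave functions are concave, it suffices to show that each iterate $J_\xi^{(k)}$ and $J_\xi^{(k)}(\mathbf{0})$ has the stated monotonicity and concavity properties. The base case $k=0$ is immediate since $J_\xi^{(0)} \equiv 0$. For the induction step, denote the two branches of \eqref{eqn:value_iteration_sum_power} by $c_p^{(k)} := \xi + \gamma + J_\xi^{(k)}(\mathbf{0})$ and $c_{np}^{(k)} := \theta\,\mathbb{E}\Gamma_{r+1} + (1-\theta)\,\mathbb{E} J_\xi^{(k)}(r+1,\Gamma_{r+1})$, so that $J_\xi^{(k+1)}(r,\gamma) = \min\{c_p^{(k)}, c_{np}^{(k)}\}$.

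Monotonicity in $\gamma$ is immediate: $c_p^{(k)}$ is affine increasing in $\gamma$ and $c_{np}^{(k)}$ is $\gamma$-independent. Monotonicity in $\xi$ of both $J_\xi^{(k+1)}(r,\gamma)$ and $J_\xi^{(k+1)}(\mathbf{0})$ follows because $c_p^{(k)}$ and $c_{np}^{(k)}$ are each non-decreasing in $\xi$ by the induction hypothesis. Monotonicity in $r$ is the one place where I would invoke the stochastic ordering assumption on $\{G_r\}$: for $r_1 < r_2$,
\begin{equation*}
\mathbb{E} J_\xi^{(k)}\bigl(r_1+1,\Gamma_{r_1+1}\bigr) \leq \mathbb{E} J_\xi^{(k)}\bigl(r_2+1,\Gamma_{r_1+1}\bigr) \leq \mathbb{E} J_\xi^{(k)}\bigl(r_2+1,\Gamma_{r_2+1}\bigr),
\end{equation*}
where the first inequality uses $r$-monotonicity of $J_\xi^{(k)}$ (IH) and the second uses $\gamma$-monotonicity of $J_\xi^{(k)}$ (IH) together with the stochastic dominance of $\Gamma_{r_1+1}$ by $\Gamma_{r_2+1}$; the same ordering yields $\mathbb{E}\Gamma_{r+1}$ non-decreasing in $r$, so $c_{np}^{(k)}$ is non-decreasing in $r$ (and $c_p^{(k)}$ trivially is).

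For concavity in $\xi$, $c_p^{(k)}$ is the sum of an affine function of $\xi$ and $J_\xi^{(k)}(\mathbf{0})$, concave by IH, and $c_{np}^{(k)}$ is a convex combination of $J_\xi^{(k)}(r+1,\gamma)$ terms over $g(r+1,\cdot)$, each concave in $\xi$ by IH, hence itself concave in $\xi$. The step I expect to need the most care is then the preservation of concavity under the pointwise minimum. This rests on the elementary observation that for concave $f, g$ in $\xi$ and $\lambda \in [0,1]$,
\begin{equation*}
\min(f,g)\bigl(\lambda\xi_1 + (1-\lambda)\xi_2\bigr) \geq \lambda\min(f,g)(\xi_1) + (1-\lambda)\min(f,g)(\xi_2),
\end{equation*}
since $f(\lambda\xi_1 + (1-\lambda)\xi_2) \geq \lambda f(\xi_1) + (1-\lambda)f(\xi_2) \geq \lambda\min(f,g)(\xi_1) + (1-\lambda)\min(f,g)(\xi_2)$, and symmetrically for $g$, so taking the minimum of these two lower bounds yields the claim. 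The same argument applied to the recursion for $J_\xi^{(k+1)}(\mathbf{0})$ (which is just an expectation of $J_\xi^{(k)}(1,\cdot)$, concave in $\xi$ by IH) gives concavity and monotonicity of $J_\xi^{(k+1)}(\mathbf{0})$ in $\xi$. Letting $k \to \infty$ via Lemma~\ref{lemma:convergence_value_iteration_sum_power} transfers every property to $J_\xi$ and $J_\xi(\mathbf{0})$, completing the plan.
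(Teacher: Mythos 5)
Your proof follows essentially the same route as the paper: induction over the value-iteration scheme, using the stochastic monotonicity of $\{G_r\}$ to show $\mathbb{E} J_\xi^{(k)}(r+1,\Gamma_{r+1})$ is increasing in $r$, preservation of monotonicity and concavity under pointwise minima, and Lemma~\ref{lemma:convergence_value_iteration_sum_power} to transfer the properties to the limit. The only cosmetic difference is that you argue concavity explicitly only in $\xi$, while the lemma also asserts concavity in $\gamma$; this follows at once from your own decomposition ($c_p^{(k)}$ affine in $\gamma$, $c_{np}^{(k)}$ constant in $\gamma$) together with the same min-of-concave-functions observation, so there is no real gap.
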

\begin{proof}
  See Appendix \ref{appendix:only_adjacent_nodes}.
\end{proof}

\begin{thm}{\em Policy Structure:}\label{theorem:policy_structure_sum_power}
 The optimal policy for Problem (\ref{eqn:unconstrained_total_power_problem}) is a threshold policy with a threshold $\gamma_{\textit{th}}(r)$ 
increasing in $r$ 
such that at a state $(r, \gamma)$ it is optimal to place a relay if and only if $\gamma \leq \gamma_{\textit{th}}(r)$. 
This corresponds to the condition $c_p \leq c_{np}$.
\end{thm}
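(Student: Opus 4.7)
The plan is to read the threshold structure directly off the Bellman equation (\ref{eqn:bellman_sum_power}) and then use Lemma~\ref{lem:properties_value_function_sum_power} together with the stochastic ordering of $\{G_r\}$ to show monotonicity of the threshold. Observe that the placement cost $c_p = \xi+\gamma+J_\xi(\mathbf{0})$ is affine and strictly increasing in $\gamma$, while the non-placement cost $c_{np} = \theta\,\mathbb{E}(\Gamma_{r+1})+(1-\theta)\,\mathbb{E}\,J_\xi(r+1,\Gamma_{r+1})$ does not depend on $\gamma$ at all (it depends only on the current distance $r$ through the distribution $G_{r+1}$). So for each fixed $r$, the condition $c_p \le c_{np}$ rearranges to
\begin{equation*}
\gamma \;\le\; \theta\,\mathbb{E}(\Gamma_{r+1})+(1-\theta)\,\mathbb{E}\,J_\xi(r+1,\Gamma_{r+1}) - \xi - J_\xi(\mathbf{0}) \;=:\; \gamma_{\textit{th}}(r),
\end{equation*}
which immediately gives the threshold form: place iff $\gamma \le \gamma_{\textit{th}}(r)$.

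Next I would show $\gamma_{\textit{th}}(r)$ is non-decreasing in $r$. Since $\xi$ and $J_\xi(\mathbf{0})$ are constants in $r$, it suffices to argue that both $\mathbb{E}(\Gamma_{r+1})$ and $\mathbb{E}\,J_\xi(r+1,\Gamma_{r+1})$ are non-decreasing in $r$. The first is immediate from the stochastic monotonicity of $\{G_r\}$, using the fact that the identity function is non-decreasing. For the second, I would split the increase from $r$ to $r+1$ into two effects and dominate each separately: for any $r_1<r_2$,
\begin{equation*}
\mathbb{E}\,J_\xi(r_2+1,\Gamma_{r_2+1}) \;\ge\; \mathbb{E}\,J_\xi(r_2+1,\Gamma_{r_1+1}) \;\ge\; \mathbb{E}\,J_\xi(r_1+1,\Gamma_{r_1+1}).
\end{equation*}
The first inequality uses Lemma~\ref{lem:properties_value_function_sum_power} (monotonicity of $J_\xi(r_2+1,\cdot)$ in $\gamma$) together with $\Gamma_{r_2+1} \ge_{st} \Gamma_{r_1+1}$, via the standard characterization of the usual stochastic order by expectations of non-decreasing functions. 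The second inequality uses the monotonicity of $J_\xi$ in $r$ (again from Lemma~\ref{lem:properties_value_function_sum_power}), applied pointwise in $\gamma$ and then integrated against the common distribution $G_{r_1+1}$.

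Combining the two chains of inequalities shows that $c_{np}$ is non-decreasing in $r$ and hence $\gamma_{\textit{th}}(r)$ is non-decreasing in $r$, completing the proof. I do not anticipate any serious obstacle: the work has essentially been done in Lemma~\ref{lem:properties_value_function_sum_power}, and the only care needed is to keep the two sources of monotonicity (stochastic ordering of $G_r$ versus monotonicity of $J_\xi$ in its first argument) cleanly separated when passing to expectations. The same template will also serve later for the analogous threshold results in the max-power case and in Section~\ref{sec:not_only_adjacent_nodes}.
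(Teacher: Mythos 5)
Your proposal is correct and follows essentially the same route as the paper: read the threshold off the Bellman equation ($c_p$ affine increasing in $\gamma$, $c_{np}$ independent of $\gamma$), then get monotonicity of $\gamma_{\textit{th}}(r)$ from Lemma~\ref{lem:properties_value_function_sum_power} and the stochastic increase of $\{G_r\}$, exactly as in Appendix~\ref{appendix:only_adjacent_nodes} (your two-step chain is the same coupling-of-monotonicities argument the paper uses). The only cosmetic omission is that the paper explicitly invokes Proposition~3.1.3 of \cite{bertsekas07dynamic-programming-optimal-control-2} to justify that the stationary policy attaining the minimum in the Bellman equation is optimal, which your ``read directly off the Bellman equation'' step tacitly assumes.
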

\begin{proof}
   See Appendix \ref{appendix:only_adjacent_nodes}.
\end{proof}
\vspace{1mm}
\begin{rem}
 If $\gamma=\gamma_{\textit{th}}(r)$, either action is optimal.
\end{rem}
\vspace{1mm}
{\em Discussion of the Policy Structure:} We do not place at an $r$ if the required power is too high, 
as one might expect to get a better channel if 
one takes another step. For each $r$, there is a threshold on $\gamma$ below which we place. This threshold increases with $r$ 
since the distribution of $\Gamma_r$ is stochastically increasing with $r$.

Note that the optimal policy in Theorem~\ref{theorem:policy_structure_sum_power} can also be stated as follows: place a relay if and 
only if $r \leq r_{th}(\gamma)$ (i.e., $c_p \leq c_{np}$) where $r_{th}(\gamma)$ is some threshold on $r$, increasing in $\gamma$. 
Moreover, if there is a function $d(\cdot)$ such that $g(r,d(r))=1$ for all $r$, $d(r)$ is convex 
increasing in $r$, $d(r)\uparrow \infty$ and $d'(r)\uparrow \infty$ as $r \uparrow \infty$, then we have the same problem as 
\cite{mondal-etal12impromptu-deployment_NCC}, in which case it is optimal to place if and only if $r \geq r_{th}$.

\subsubsection{Computation of the Optimal Policy}\label{subsubsec:policy_computation_sum_power_adjacent} Let 
us write $V_{\xi}(r):=\mathbb{E} J_{\xi}\left(r, \Gamma_{r} \right)$, i.e., 
$V_{\xi}(r):=\sum_{\gamma} g(r,\gamma) J_{\xi}\left(r, \gamma \right)$ for all $r \in \{1,2,3,\cdots\}$, and  
$V_{\xi}(\mathbf{0}):=J_{\xi}(\mathbf{0})$. Also, for each stage $k \geq 0$ of the value iteration (\ref{eqn:value_iteration_sum_power}), 
define $V_{\xi}^{(k)}(r):=\mathbb{E} J_{\xi}^{(k)}\left(r, \Gamma_{r} \right)$ and 
$V_{\xi}^{(k)}(\mathbf{0}):=J_{\xi}^{(k)}(\mathbf{0})$. 


Observe that from the value iteration (\ref{eqn:value_iteration_sum_power}), we obtain:
\begin{eqnarray}
 V_{\xi}^{(k+1)}(r)&=&\sum_{\gamma} g(r, \gamma) \min \bigg\{ \xi+ \gamma + V_{\xi}^{(k)}(\mathbf{0}),  \nonumber\\
& &\theta \mathbb{E} (\Gamma_{r+1}) + (1-\theta) V_{\xi}^{(k)}(r+1)  \bigg\} \nonumber\\
V_{\xi}^{(k+1)}(\mathbf{0})&=&\theta \mathbb{E} (\Gamma_{1}) + (1-\theta) V_{\xi}^{(k)} (1)
\label{eqn:new_iteration_sum_power}
\end{eqnarray}
Since $J_{\xi}^{(k)}(r,\gamma) \uparrow J_{\xi}(r,\gamma)$ for each $r$, $\gamma$ and 
$J_{\xi}^{(k)}(\mathbf{0}) \uparrow J_{\xi}(\mathbf{0})$ as $k \uparrow \infty$, we can argue that 
$V_{\xi}^{(k)}(r)\uparrow \mathbb{E}J_{\xi}(r, \Gamma (r))$ for all 
$r \in \{1,2,3,\cdots\}$ (by Monotone Convergence Theorem) and  
$V_{\xi}^{(k)}(\mathbf{0})\uparrow J_{\xi}(\mathbf{0})$. Thus, 
$V_{\xi}^{(k)}(r)\uparrow V_{\xi}(r)$ and $V_{\xi}^{(k)}(\mathbf{0}) \uparrow V_{\xi}(\mathbf{0})$. 
 Hence, by the function iteration (\ref{eqn:new_iteration_sum_power}), we obtain $V_{\xi}(\mathbf{0})$ and 
$V_{\xi}(r)$ for all $r \geq 1$. Then, from (\ref{eqn:bellman_sum_power}), 
we can compute $\gamma_{\textit{th}}(r)$. Thus, for this iteration, we need not keep track of the cost-to-go values 
$J_{\xi}^{(k)}(r, \gamma)$ at each stage $k$; we simply need to keep track of $V_{\xi}^{(k)}(r)$.

\begin{figure*}[t!]
\begin{eqnarray}
 J_{\xi}(r,\gamma,\gamma_{\textit{max}})
&=&\min \bigg\{ \underbrace{\xi+ \theta \mathbb{E} \max\{\gamma, \gamma_{\textit{max}}, \Gamma_{1}\}
+ (1-\theta) \mathbb{E}J_{\xi}(1,\Gamma_{1},
\max\{\gamma, \gamma_{\textit{max}}\})}_{c_p},\nonumber\\ 
&& \underbrace{\theta \mathbb{E} \max \{\gamma_{\textit{max}}, \Gamma_{r+1}\}+(1-\theta)
  \mathbb{E}J_{\xi}(r+1, \Gamma_{r+1}, \gamma_{\textit{max}})}_{c_{np}} \bigg\}\label{eqn:bellman_max_power}
\end{eqnarray}
\begin{eqnarray}
 J_{\xi}^{(k+1)}(r,\gamma,\gamma_{\textit{max}})=
\min & \bigg\{ & \xi+ \theta \mathbb{E} \max\{\gamma, \gamma_{\textit{max}}, \Gamma_{1}\}
+ (1-\theta) \mathbb{E}J_{\xi}^{(k)}(1,\Gamma_{1}, \max\{\gamma, \gamma_{\textit{max}}\}), \nonumber\\
& & \theta \mathbb{E} \max \{\gamma_{\textit{max}}, \Gamma_{r+1}\}+(1-\theta) 
  \mathbb{E}J_{\xi}^{(k)}(r+1, \Gamma_{r+1}, \gamma_{\textit{max}}) \bigg\}\label{eqn:value_iteration_max_power}
\end{eqnarray}
\hrule
\end{figure*}

\subsubsection{A Numerical Example}\label{subsubsec:numerical_sum_power_adjacent} 
We take $\delta =2$ meters and $\theta = 0.025$, (i.e., $\overline{L}
= 40$~steps, or $80$~meters), and
$\mathcal{S}=\{-25,-20,-15,-10,-5,0,3\}$ in dBm. Using a standard
model, with transmit power $P_T$ (mW), the received power (in mW) at a
distance $r$ from the transmitter is given by $P_T \alpha
(\frac{r}{r_0×})^{-\eta} H 10^{-\frac{\nu}{10×}}$ where $\alpha$ is a
constant and $r_0$ is a reference distance.  $H$ models Rayleigh
fading, and is exponentially distributed with mean $1$.  $\nu$ is
assumed to be distributed as $\mathcal{N}(0,\sigma^2)$ with $\sigma=8$
dB; i.e., we have log-normal shadowing. The shadowing is assumed to be
independent from link to link. For a commercial implementation of the
PHY/MAC of IEEE~802.15.4 (a popular wireless sensor networking
standard), $-88$~dBm received power corresponds to a $2\%$ packet loss
probability for $140$ byte packets. Taking $-88$~dBm to be the minimum
acceptable received power, we set the target received power (averaged
over fading) to be $\psi=10^{-7.5}$~mW (i.e., $-75$~dBm), which (under
Rayleigh fading) yields an ``outage'' probability of $5$\%.  Hence,
the transmit power required to establish the link is:
\begin{equation}
 P_{req}=\frac{\psi }{10^{-\frac{\nu}{10×}} \alpha×}\left({\frac{r}{r_0×}}\right)^{\eta} \label{eqn:reqd_transmit_power} 
\end{equation}

Now, since the set $\mathcal{S}$ is bounded, at large distance the required power to achieve $\psi$ will exceed $2$ mW 
(i.e., $3$ dBm) with high 
probability. To tackle this problem, we take the following 
approach. {\em We modify the problem formulation by 
requiring that a relay is placed when $r=10$ (in steps).}
For $1 \leq r \leq 10$ (in steps), we obtain $\Gamma_r$ by inverting the path-loss formula and using the next higher 
power level in $\mathcal{S}$. For example, if the required transmit power to establish the link,
obtained from (\ref{eqn:reqd_transmit_power}), is in $(-5,0]$ dBm, 
then we say that the power requirement is $0$ dBm for that link. 
But, if the power requirement is more than $3$ dBm, then we say that the required power is $3$ dBm. 
It is easy to see that the distribution $G_{r}(\cdot)$ of $\Gamma_r$ obtained in this fashion is stochastically increasing
\footnote{The results asserted for the formulation in 
Section~\ref{subsec:sum_power_adjacent}, e.g., threshold structure of the optimal policy, 
hold for this variation as well. The only change will be that at $r=10$ 
steps the only feasible action is to place the relay. We can use the same iterations as 
(\ref{eqn:value_iteration_sum_power}) and (\ref{eqn:new_iteration_sum_power}) to compute the optimal policy in this case, 
except that we need to set $J_{\xi}^{(k)}(r,\gamma)$ or $V_{\xi}^{(k)}(r)$ to $\infty$ at each iteration if $r$ is $11$ steps 
or more. This will force the policy to place the relay within every $10$ steps.} in $r$. 
For the numerical work, we assume that $\eta=2.5$, $\frac{\alpha}{r_0^{-\eta}×}=10^{-3}/m^{-2.5}$ ($-30$ dB). 
With these parameters, the probablity of the transmit power required to achieve a target received power $\psi=-75$ dBm 
(averaged over fading) for a link of length $10$ steps ($20$ meters) exceeding $3$ dBm (which is the maximum transmit power) 
will be less than $2.65\%$. This probability 
will be even less for links having length smaller than $20$ meters. 
Note that these comment imply that there is a positive probability of deployment failure; 
we will quantify this later in Section~\ref{subsubsec:deployment_failure_sum_power_adjacent}.

Figure~\ref{fig:sum_power_gamma_th_vs_r} shows  
the variation of $\gamma_{\textit{th}}(r)$ with $r$ and $\xi$. Here the unit of $\xi$ is mW/relay, so that 
the unit of $J_{\xi}(\mathbf{0})$ is also mW. 
We see from Figure~\ref{fig:sum_power_gamma_th_vs_r} that, for $\xi = 0.001$, if the previous relay is $8$~meters behind, and the 
required power is $-20$ dBm or below, then a relay should be placed at this point because $\gamma_{th}(r=8 \, meters)=-20 dBm$. 
Also, note that $-25$ dBm 
is the smallest possible transmit power level, and for $\xi=0.001$ we will never place a relay at $r=2$ meters since there 
$\gamma_{th}(r)$ is $0$ mW ($-\infty$ dBm).
The variation of $\gamma_{\textit{th}}(r)$ with $r$ has already been 
established in Theorem~\ref{theorem:policy_structure_sum_power}. 
But Figure~\ref{fig:sum_power_gamma_th_vs_r} also 
shows that $\gamma_{\textit{th}}(r)$ is decreasing in $\xi$ for each $r$. This is intuitive because $\xi$ is the price of a relay, 
and all it says is that as the price of a relay increases, we will place the relays less frequently.

The variation of the mean number of relays $\mathbb{E}(N)$ and various cost components with $\xi$ is shown in 
Table~\ref{table:sum_power_adjacent}. It shows that as the cost of a relay $\xi$ increases, the mean number of relays decreases, and 
the power cost and $J_{\xi}(\mathbf{0})$ increase.


\begin{table}[t!]
\centering
\begin{tabular}{|c |c |c |c|}
\hline
                 & $\xi=0.001$ & $\xi=0.01$ & $\xi=0.1$ \\ \hline
$\mathbb{E}(N)$  & 15.8754  &   10.3069  &  5.3225 \\ \hline
Relay cost $\xi \mathbb{E}(N)$ & 0.01588 &  0.10307  &   0.53225  \\ \hline
Power Cost  &  0.07513 & 0.08277  &  0.19291 \\ \hline
$J_{\xi}(\mathbf{0})$  & 0.09101  &  0.18584 & 0.72516\\ \hline
\end{tabular}
\caption{Relaying via the last placed relay: break-up of the optimal cost for 
the example in Section~\ref{subsec:sum_power_adjacent}, for three values of the relay cost $\xi$.}
\vspace{-0.8cm}
\label{table:sum_power_adjacent}
\end{table}

\begin{figure}[!t]
\centering
\includegraphics[scale=0.27]{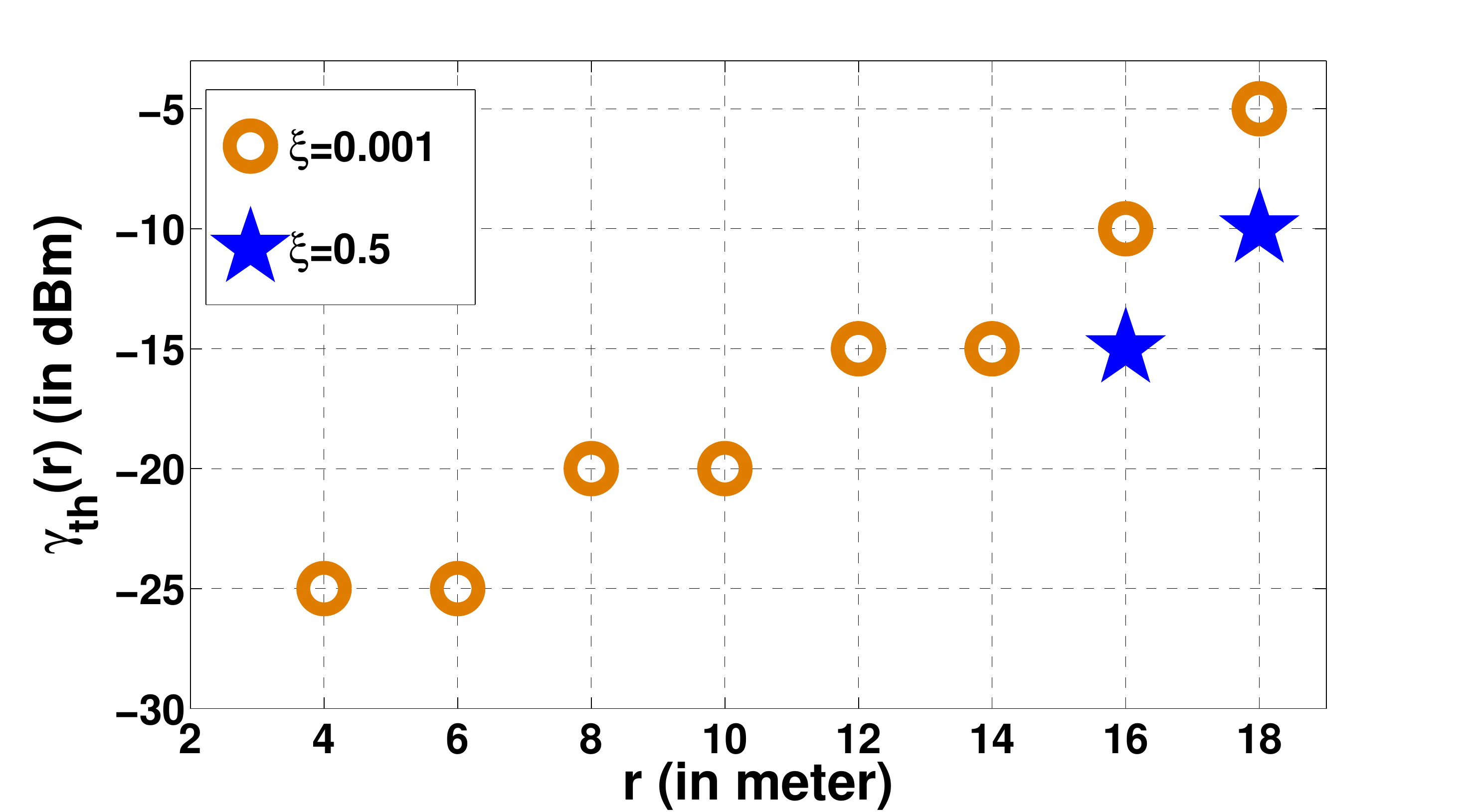}
\vspace{-2mm}
\caption{Relaying only via last placed relay: $\gamma_{th}(r)$ vs.\ $r$, for various relay costs, $\xi$,  
for the numerical example described in Section~\ref{subsubsec:numerical_sum_power_adjacent}.}
\label{fig:sum_power_gamma_th_vs_r}
\vspace{-4mm}
\end{figure}

\subsubsection{Deployment Failure}\label{subsubsec:deployment_failure_sum_power_adjacent}

Since, in practice, there is a maximum power at which a transmitter can 
transmit (e.g., 3~dBm), there is a possibility that a deployment can 
fail. It is interesting to compute the probability of such failure in 
the algorithms that we have derived. Here we provide a simulation 
estimate of the deployment failure probabilities for the sum power objective, under the threshold policies 
obtained numerically in Section~\ref{subsubsec:numerical_sum_power_adjacent}. In our numerical 
example in Section~\ref{subsubsec:numerical_sum_power_adjacent}, deployment failure can occur in the 
following two cases: (i) at $r=10$ steps the required power exceeds 3 dBm; the probability of this event is 
$2.65\%$ , and (ii) the source at the end of the line 
requires more than 3~dBm power. By simulating 200000 deployments, we observe that the 
deployment failure probability for $\xi=0.001,0.01,0.1$ and $1$ are $0.025\%,0.057\%,0.555\%$ and $3.8\%$ respectively. 

Evidently, there is a trade-off between the target link performance and 
the probability of link failure. In addition, by placing relays more 
frequently, we reduce the chance of being caught in a situation where 
the deployment operative has walked too far without placing a relay and is 
unable to get a workable link to the previous node. In future work, we 
propose to include deployment failure probability as a constraint in the 
optimization formulation. Another way to reduce deployment failure is to 
permit {\em back-tracking} by the deployment operative, which, of course, will 
require the placement algorithm to keep more measurement history; we 
propose to permit this in our future work as well.

\vspace{-1mm}
\subsection{Max-Power Objective}\label{subsec:max_power_adjacent}
\vspace{-2mm}
\subsubsection{Problem Formulation}\label{subsubsec:formulation_max_power_adjacent}
We aim to address the following problem:
\begin{eqnarray}
 \min_{\pi \in \Pi} \mathbb{E}_{\pi} \bigg (\max_{i \in \{1,2,\cdots,N+1\}} \Gamma^{(i,i-1)}+\xi N \bigg) 
\label{eqn:unconstrained_max_power_problem} 
\end{eqnarray}
We formulate (\ref{eqn:unconstrained_max_power_problem}) as an MDP.  
The state of the system is $(r,\gamma,\gamma_{\textit{max}})$ where $r$ and $\gamma$ are the same as before, and $\gamma_{\textit{max}}$ 
is the maximum power used in all the previously established links. The action space is 
$\{\textit{place}, \, \textit{do not place}\}$ as before. The cost structure is such that 
the power cost is incurred only after the source node is placed.

\subsubsection{Bellman Equation}\label{subsubsec:bellman_max_power_adjacent}
The problem is again an infinite horizon total cost problem with countable state space, finite action space and 
nonnegative single-stage cost. Hence, 
by the same arguments as used in problem (\ref{eqn:unconstrained_total_power_problem}), the optimal value 
function $J_{\xi}(\cdot)$ satisfies the Bellman equation (\ref{eqn:bellman_max_power}). At state $\mathbf{0}$, it is not 
optimal to place a relay. Hence, $J_{\xi}(\mathbf{0})=\theta \mathbb{E}(\Gamma_1)+(1-\theta)\mathbb{E}J_{\xi}(1,\Gamma_1,0)$.

At state $(r,\gamma,\gamma_{\textit{max}})$, if we place a relay, we incur a cost $\xi$ and in the next step the line ends 
with probability $\theta$ in which case a power cost of $\mathbb{E} \max\{\gamma, \gamma_{\textit{max}}, \Gamma_{1}\}$ is 
incurred. If the line does not end in the next step, the next state becomes $(1,\gamma', \max\{\gamma, \gamma_{\textit{max}}\})$ 
where $\gamma' \sim G_{1}$, and a cost of 
$\mathbb{E}J_{\xi}(1,\Gamma_{1}, \max\{\gamma, \gamma_{\textit{max}}\})$ is incurred. On the other hand, if we do not place a 
relay at state $(r,\gamma,\gamma_{\textit{max}})$, the line ends in the next step 
with probability $\theta$ in which case a power cost of $\mathbb{E} \max\{\gamma_{\textit{max}}, \Gamma_{r+1}\}$ is 
incurred. If the line does not end in the next step, the next state will be $(r+1, \gamma', \gamma_{\textit{max}})$ where 
$\gamma' \sim G_{r+1}$.

\subsubsection{Value Iteration}\label{subsubsec:value_iteration_max_power_adjacent}
The value iteration for this MDP is given by (\ref{eqn:value_iteration_max_power})
with $J_{\xi}^{(0)}(r,\gamma,\gamma_{\textit{max}})=0$ for all $r$, $\gamma$, $\gamma_{\textit{max}}$.

\begin{lem}\label{lemma:convergence_value_iteration_max_power}
 The iterates of the value iteration (\ref{eqn:value_iteration_max_power}) converge 
to the optimal value function, i.e., 
$J_{\xi}^{(k)}(r,\gamma,\gamma_{\textit{max}}) \uparrow J_{\xi}(r,\gamma,\gamma_{\textit{max}})$ for all 
$(r,\gamma,\gamma_{\textit{max}})$, as $k \uparrow \infty$.
\end{lem}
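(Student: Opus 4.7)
The plan is to mirror the proof of Lemma~\ref{lemma:convergence_value_iteration_sum_power}, since the present MDP shares the same structural features: a countable state space (recall that $r \in \{1,2,\ldots\}$ and $\gamma,\gamma_{\textit{max}} \in \mathcal{S}$, a finite set), a finite action space $\{\textit{place}, \textit{do not place}\}$, and nonnegative single-stage costs. Consequently, Assumption P of Chapter~3 of \cite{bertsekas07dynamic-programming-optimal-control-2} holds, and the convergence claim will follow from Proposition~3.1.1 of the same reference, which guarantees that when one starts the value iteration from $J^{(0)} \equiv 0$ under Assumption P, the iterates converge monotonically upward to the optimal value function.

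To verify the hypotheses, I would first observe that every term on the right-hand side of~(\ref{eqn:bellman_max_power}) is nonnegative: the relay cost $\xi$, the maximum-power expectations, and (inductively) the continuation values. Next, I would establish the monotonicity of the iterates, $J_{\xi}^{(k+1)}(r,\gamma,\gamma_{\textit{max}}) \geq J_{\xi}^{(k)}(r,\gamma,\gamma_{\textit{max}})$, by induction on $k$. The base case is immediate since $J_\xi^{(1)} \geq 0 = J_\xi^{(0)}$. For the inductive step, I would invoke the monotonicity of the Bellman operator appearing in~(\ref{eqn:value_iteration_max_power}): increasing the continuation function $J^{(k)}$ pointwise only increases both arguments of the outer $\min$, and hence the output. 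Since the iterates are nondecreasing and nonnegative, the pointwise limit $\tilde{J}_{\xi}(r,\gamma,\gamma_{\textit{max}}) := \lim_{k \to \infty} J_{\xi}^{(k)}(r,\gamma,\gamma_{\textit{max}})$ exists in $[0,\infty]$.

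The remaining step is to pass the limit through the expectations and the $\min$ in~(\ref{eqn:value_iteration_max_power}) so as to conclude that $\tilde{J}_\xi$ satisfies the Bellman equation~(\ref{eqn:bellman_max_power}). The interchange of limit and expectation is justified by the Monotone Convergence Theorem applied to the nonnegative, nondecreasing sequence $\{J_{\xi}^{(k)}\}$; the interchange with $\min$ is routine since the minimum is over a two-element set. Finally, a standard argument (again Proposition~3.1.1 of \cite{bertsekas07dynamic-programming-optimal-control-2}) identifies any nonnegative solution of the Bellman equation obtained as this monotone limit with the optimal value function $J_\xi$.

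The only subtle point, which I expect to be the main obstacle, is handling the unbounded $r$-component of the state: unlike $\gamma$ and $\gamma_{\textit{max}}$, the index $r$ ranges over an infinite set, so one must verify that the MCT-based interchange goes through uniformly in $r$. This is automatic once one notes that the Bellman recursion at state $(r,\gamma,\gamma_{\textit{max}})$ only references states $(r+1,\cdot,\cdot)$ and $(1,\cdot,\cdot)$, so each application of MCT involves an expectation over the finite set $\mathcal{S}$, independent of $r$.
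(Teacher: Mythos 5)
Your proposal is correct and follows essentially the same route as the paper, whose proof simply observes that Assumption~P of Chapter~3 of \cite{bertsekas07dynamic-programming-optimal-control-2} holds (nonnegative single-stage costs, countable state space, finite action space) and then invokes Propositions~3.1.5 and~3.1.6 there to get monotone convergence of value iteration from the zero function to $J_{\xi}$; your sketch just re-derives the content of those propositions (monotone iterates, MCT over the finite set $\mathcal{S}$, interchange with the two-element $\min$). One small correction: the value-iteration convergence is not Proposition~3.1.1 (that is the Bellman-equation result), and showing the monotone limit satisfies the Bellman equation does not by itself identify it with $J_{\xi}$ — under Assumption~P the Bellman equation may have several nonnegative solutions — so you also need the two-sided comparison ($J_{\xi}^{(k)} \leq J_{\xi}$ for all $k$ by monotonicity of the DP operator, and $J_{\xi}$ being the least nonnegative fixed point), which is precisely what the cited Propositions~3.1.5--3.1.6 package.
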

\begin{proof}
 See Appendix \ref{appendix:only_adjacent_nodes}.
\end{proof}

\subsubsection{Policy Structure}\label{subsubsec:policy_structure_max_power_adjacent}

\begin{lem}\label{lem:properties_value_function_max_power}
 $J_{\xi}(r, \gamma, \gamma_{\textit{max}})$ is concave, increasing in $\xi$ and increasing in $r$, $\gamma$, $\gamma_{\textit{max}}$. 
\end{lem}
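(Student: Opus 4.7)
The plan is to establish all the claimed properties of $J_{\xi}(r,\gamma,\gamma_{\textit{max}})$ by induction on the stage index $k$ of the value iteration (\ref{eqn:value_iteration_max_power}), and then pass to the limit using Lemma~\ref{lemma:convergence_value_iteration_max_power}. Concretely, I will show that for every $k\ge 0$, the iterate $J_{\xi}^{(k)}$ is (i) concave in $\xi$, and (ii) nondecreasing in each of $\xi$, $r$, $\gamma$, and $\gamma_{\textit{max}}$. The base case $J_{\xi}^{(0)}\equiv 0$ trivially satisfies all these properties.

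For the induction step, assume the properties hold for $J_{\xi}^{(k)}$ and examine the two branches of the minimum in (\ref{eqn:value_iteration_max_power}), call them $c_p$ and $c_{np}$. For \emph{concavity in $\xi$}, note that $\xi$ appears affinely in $c_p$, does not appear in the $\max$-expectation terms, and appears only through $J_{\xi}^{(k)}$ in both branches; by the induction hypothesis $J_{\xi}^{(k)}$ is concave in $\xi$, so $c_p$ and $c_{np}$ are concave in $\xi$, and hence so is their pointwise minimum. For \emph{monotonicity in $\xi$}, each of $c_p$ and $c_{np}$ is nondecreasing in $\xi$ (directly for $c_p$, and through $J_{\xi}^{(k)}$ for both), so the minimum is nondecreasing in $\xi$.

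For \emph{monotonicity in $\gamma$}, observe that $c_{np}$ does not depend on $\gamma$, while in $c_p$ the term $\mathbb{E}\max\{\gamma,\gamma_{\textit{max}},\Gamma_1\}$ is nondecreasing in $\gamma$ and $\mathbb{E}J_{\xi}^{(k)}\bigl(1,\Gamma_1,\max\{\gamma,\gamma_{\textit{max}}\}\bigr)$ is nondecreasing in $\gamma$ (by monotonicity of $\max$ in $\gamma$ combined with the induction hypothesis that $J_{\xi}^{(k)}$ is nondecreasing in its third argument); the minimum of a nondecreasing function and a constant is nondecreasing. The argument for \emph{monotonicity in $\gamma_{\textit{max}}$} is identical in spirit: both $c_p$ and $c_{np}$ are nondecreasing in $\gamma_{\textit{max}}$ through the $\max$ operations inside the expectations and through the third argument of $J_{\xi}^{(k)}$. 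For \emph{monotonicity in $r$}, only $c_{np}$ depends on $r$; using that $\{G_r\}$ is stochastically increasing in $r$ and that $\gamma\mapsto\max\{\gamma_{\textit{max}},\gamma\}$ and $\gamma\mapsto J_{\xi}^{(k)}(r+1,\gamma,\gamma_{\textit{max}})$ are nondecreasing (the latter by the induction hypothesis), together with the fact that $J_{\xi}^{(k)}$ itself is nondecreasing in its first argument, a two-step coupling argument shows $c_{np}$ is nondecreasing in $r$.

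Finally, since $J_{\xi}^{(k)}(r,\gamma,\gamma_{\textit{max}})\uparrow J_{\xi}(r,\gamma,\gamma_{\textit{max}})$ pointwise, and both pointwise monotonicity in each argument and concavity in $\xi$ are preserved under pointwise limits of sequences of functions, $J_{\xi}$ inherits all the stated properties. The main subtlety lies in the monotonicity in $r$, since one must combine the stochastic monotonicity of $\Gamma_{r+1}$ with the inductively established monotonicity of $J_{\xi}^{(k)}$ in its first two arguments; the remaining steps reduce to routine preservation of concavity and monotonicity under expectations, maxima, and pointwise minima.
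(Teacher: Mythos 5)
Your proposal is correct and follows essentially the same route as the paper: induction on the value-iteration index, checking that each branch of the minimum preserves monotonicity in $r$, $\gamma$, $\gamma_{\textit{max}}$ (via the stochastic monotonicity of $G_r$ and the induction hypothesis) and concavity/monotonicity in $\xi$, then passing to the limit using Lemma~\ref{lemma:convergence_value_iteration_max_power}. If anything, your treatment of concavity in $\xi$ (both branches are concave because $\xi$ enters affinely and through the inductively concave $J_{\xi}^{(k)}$) is slightly more careful than the paper's shorthand "minimum of a linear function and a constant is concave."
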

\begin{proof}
 See Appendix \ref{appendix:only_adjacent_nodes}.
\end{proof}

\begin{thm}{\em Policy Structure:}\label{theorem:policy_structure_max_power}
The conditions for optimal relay placement are:
\begin{enumerate}[label=(\roman{*})]
\item {\em If $\gamma \leq \gamma_{\textit{max}}$,} place the relay when $r \geq r_{\textit{th}}(\gamma_{\textit{max}})$ where 
$r_{\textit{th}}(\gamma_{\textit{max}})$ is a threshold value.
 \item {\em If $\gamma>\gamma_{\textit{max}}$,} place the relay when 
$\gamma \leq \gamma_{\textit{th}}(r, \gamma_{\textit{max}})$ where 
$\gamma_{\textit{th}}(r, \gamma_{\textit{max}})$ is 
a threshold value increasing in $r$ and $\gamma_{\textit{max}}$.
\end{enumerate}
 \end{thm}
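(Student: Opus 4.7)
The plan is to exploit the structure of the Bellman equation \eqref{eqn:bellman_max_power} by splitting the analysis according to whether the candidate link's required power $\gamma$ lies below or above the current running maximum $\gamma_{\textit{max}}$. In each region one of the two cost terms $c_p, c_{np}$ simplifies and depends on strictly fewer state variables, which opens the door to a clean comparative-statics argument using the properties of $J_\xi$ from Lemma~\ref{lem:properties_value_function_max_power}.

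First I would observe that when $\gamma \leq \gamma_{\textit{max}}$, the quantities $\max\{\gamma, \gamma_{\textit{max}}, \Gamma_1\}$ and $\max\{\gamma, \gamma_{\textit{max}}\}$ collapse to $\max\{\gamma_{\textit{max}}, \Gamma_1\}$ and $\gamma_{\textit{max}}$ respectively, so $c_p$ depends only on $\gamma_{\textit{max}}$ (neither on $\gamma$ nor on $r$). Since $c_{np}$ depends only on $(r, \gamma_{\textit{max}})$ in any case, the decision rule reduces to a comparison in $(r, \gamma_{\textit{max}})$. The remaining step for part (i) is to verify that $c_{np}$ is nondecreasing in $r$ for fixed $\gamma_{\textit{max}}$. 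This combines two ingredients: monotonicity of $J_{\xi}$ in its first two arguments from Lemma~\ref{lem:properties_value_function_max_power}, together with the stochastic increase of $\{\Gamma_r\}_{r\geq 1}$ in $r$, which together imply that both $\mathbb{E}\max\{\gamma_{\textit{max}}, \Gamma_{r+1}\}$ and $\mathbb{E}J_{\xi}(r+1, \Gamma_{r+1}, \gamma_{\textit{max}})$ are nondecreasing in $r$. A threshold $r_{\textit{th}}(\gamma_{\textit{max}})$ on $r$ then follows immediately.

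Second, for $\gamma > \gamma_{\textit{max}}$ I would simplify symmetrically: $\max\{\gamma, \gamma_{\textit{max}}, \Gamma_1\} = \max\{\gamma, \Gamma_1\}$ and $\max\{\gamma, \gamma_{\textit{max}}\} = \gamma$, so $c_p$ depends only on $\gamma$ while $c_{np}$ remains a function of $(r, \gamma_{\textit{max}})$ alone. The function $c_p$ is nondecreasing in $\gamma$ because $\gamma \mapsto \mathbb{E}\max\{\gamma, \Gamma_1\}$ is clearly nondecreasing and, by the third-argument monotonicity of $J_{\xi}$ from Lemma~\ref{lem:properties_value_function_max_power}, $\gamma \mapsto \mathbb{E}J_{\xi}(1, \Gamma_1, \gamma)$ is also nondecreasing. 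Hence the inequality $c_p \leq c_{np}$ carves out an initial interval of $\gamma$ values with upper endpoint $\gamma_{\textit{th}}(r, \gamma_{\textit{max}})$, yielding the characterization in part (ii).

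Finally, the monotonicity of $\gamma_{\textit{th}}(r, \gamma_{\textit{max}})$ in each of its arguments follows from comparative statics. In the region $\gamma > \gamma_{\textit{max}}$, $c_p$ is independent of both $r$ and $\gamma_{\textit{max}}$, while $c_{np}$ is nondecreasing in each: in the $r$-direction by the argument used in part (i); in the $\gamma_{\textit{max}}$-direction because $\max\{\gamma_{\textit{max}}, \cdot\}$ is nondecreasing in $\gamma_{\textit{max}}$ and $J_{\xi}(r+1, \cdot, \gamma_{\textit{max}})$ is nondecreasing in $\gamma_{\textit{max}}$ by Lemma~\ref{lem:properties_value_function_max_power}. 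Thus as $r$ or $\gamma_{\textit{max}}$ rises, the set $\{\gamma : c_p \leq c_{np}\}$ expands and the threshold moves upward. The main obstacle here is just the bookkeeping around the structural properties of $J_\xi$: we need monotonicity in all three coordinates, and this is precisely the content of Lemma~\ref{lem:properties_value_function_max_power}; once that lemma is in hand, Theorem~\ref{theorem:policy_structure_max_power} reduces to the case-split sketched above.
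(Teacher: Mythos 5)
Your proposal is correct and follows essentially the same route as the paper: write the placement condition $c_p \leq c_{np}$ from the Bellman equation \eqref{eqn:bellman_max_power}, use Lemma~\ref{lem:properties_value_function_max_power} together with the stochastic monotonicity of $\Gamma_r$ to show $c_{np}$ is increasing in $r$ and $\gamma_{\textit{max}}$, and then split on $\gamma \leq \gamma_{\textit{max}}$ versus $\gamma > \gamma_{\textit{max}}$, where $c_p$ loses its dependence on $\gamma$ (resp.\ on $r$ and $\gamma_{\textit{max}}$). The only difference is that you spell out the collapse of the $\max$ terms and the comparative statics a bit more explicitly than the paper does; no gap.
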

\begin{proof}
 See Appendix \ref{appendix:only_adjacent_nodes}.
\end{proof}

{\em Discussion of the policy structure:}
When $\gamma \leq \gamma_{\textit{max}}$, we can postpone placement until the point beyond which the chance of 
getting a worse value of power becomes significant. For $\gamma > \gamma_{\textit{max}}$, waiting to place the relay may 
result in a better channel; there is a threshold $\gamma_{\textit{th}}(r, \gamma_{\textit{max}})$ 
such that $\gamma_{\textit{th}}(r, \gamma_{\textit{max}})$ may cross $\gamma_{\textit{max}}$ for large enough $r$. 
If $\gamma$ is between these two values then we place.

\subsubsection{Computation of the Optimal Policy}\label{subsubsec:policy_computation_max_power_adjacent}
 Let us define $V_{\xi}(r,\gamma_{\textit{max}}):=\mathbb{E} J_{\xi}(r,\Gamma_{r},\gamma_{\textit{max}})$. 
We can again argue that the following function iteration (similar to that used in Section~\ref{subsec:sum_power_adjacent})) 
will yield $V_{\xi}(r,\gamma_{\textit{max}})$ for all 
$r$, $\gamma_{\textit{max}}$, from which we can compute 
$r_{\textit{th}}(\gamma_{\textit{max}})$ and $\gamma_{\textit{th}}(r, \gamma_{\textit{max}})$:

\footnotesize
\begin{eqnarray}
 V_{\xi}^{(k+1)}(r,\gamma_{\textit{max}})&=& \sum_{\gamma} g(r,\gamma) \min \bigg\{ \xi+ \theta \mathbb{E} \max\{\gamma, \gamma_{\textit{max}}, \Gamma_{1}\} \nonumber\\
&+& (1-\theta) V_{\xi}^{(k)}(1, \max\{\gamma, \gamma_{\textit{max}}\}), \nonumber\\
&&\theta \mathbb{E} \max \{\gamma_{\textit{max}}, \Gamma_{r+1}\}\nonumber\\
&+& (1-\theta) V_{\xi}^{(k)}(r+1, \gamma_{\textit{max}}) \bigg\} \label{eqn:function_iteration_max_power}
\end{eqnarray}
\normalsize
with $V_{\xi}^{(0)}(r,\gamma_{\textit{max}})=0$ for all $r$, $\gamma_{\textit{max}}$.

\begin{table}[t!]
\centering
\begin{tabular}{|c |c |c |c|}
\hline
 & $\xi=0.001$ & $\xi=0.01$ & $\xi=0.1$ \\ \hline
$\mathbb{E}(N)$  & 18.1178 &  8.6875   &  4.6615 \\ \hline
Relay Cost  & 0.01812 &  0.08688  &   0.46615  \\ \hline
Power Cost  & 0.01524  &  0.04436  & 0.15079 \\ \hline
$J_{\xi}(\mathbf{0})$  & 0.03336 & 0.13124  &  0.61693\\ \hline
\end{tabular}
\caption{Relaying via the last placed relay: break-up of the optimal cost for 
the example in Section~\ref{subsec:max_power_adjacent}, for three values of the relay cost $\xi$.}
\vspace{-0.5cm}
\label{table:max_power_adjacent}
\end{table}

\subsubsection{A Numerical Example}\label{subsubsec:numerical_max_power_adjacent} Figure~\ref{fig:max_power_r_th_vs_gamma_max} 
shows the variation of $r_{\textit{th}}(\gamma_{\textit{max}})$ 
with $\gamma_{\textit{max}}$ and $\xi$. Here we consider the same setting as 
Section~\ref{subsubsec:numerical_sum_power_adjacent}. 
The plot shows that $r_{\textit{th}}(\gamma_{\textit{max}})$ increases 
with $\gamma_{\textit{max}}$. To get an insight into the reason, let us 
consider the situation $\gamma < \gamma_{\textit{max}}$. If $r$ is small, then it is more likely 
that in the next step also the power required to establish a link to the last node will be below $\gamma_{\textit{max}}$, 
and hence, we don't need to place a relay. But 
if $r$ is large, then it is more likely that the required power will cross $\gamma_{\textit{max}}$ in the next step, and hence we will have a threshold 
$r_{\textit{th}}(\gamma_{\textit{max}})$ beyond which we have to place the relay. As $\gamma_{\textit{max}}$ increases, the 
probability that the power required to establish a link to the last node exceeding $\gamma_{\textit{max}}$ decreases for 
each $r$, thereby 
increasing $r_{\textit{th}}(\gamma_{\textit{max}})$. Also, $r_{\textit{th}}(\gamma_{\textit{max}})$ increases with $\xi$ because if 
the price of a relay increases, we will place relays less frequently.

Figure~\ref{fig:max_power_gamma_th_vs_r_max_power} shows the variation of $\gamma_{th}(r, \gamma_{max})$ with $r$ for 
$\gamma_{max}=-20$ dBm and two 
different values of $\xi$. 
For very small $\xi$ (e.g., $\xi=0.00001$), $\gamma_{th}(r, \gamma_{max})$ can be more than $\gamma_{max}$ for all $r$. For moderate 
values of $\xi$ (e.g., $\xi=0.01$ in Figure~\ref{fig:max_power_r_th_vs_gamma_max}), $\gamma_{th}(r, \gamma_{max})$ vs. $r$ curve 
crosses $\gamma_{max}$ at $r=r_{th}(\gamma_{max})$. Also, we have seen numerically that for $\xi$ very large, 
$\gamma_{th}(r, \gamma_{max})$ is $0$ mW for $r \leq 9$ steps; for large $\xi$ we will place a relay only when $r=10$ steps 
($20$~meters). 

The variation of the mean number of relays $\mathbb{E}(N)$ and different cost components with $\xi$ is shown in 
Table~\ref{table:max_power_adjacent}. It shows that as the cost of a relay $\xi$ increases, the mean number of relays decrease, and 
the power cost and $J_{\xi}(\mathbf{0})$ increase. Note that, for any given deployment and any given relay cost $\xi$, the sum power 
is always greater than the max power in the network. Hence, for a given $\xi$, the mean power cost and $J_{\xi}(\mathbf{0})$ for the sum power objective 
will be greater than the corresponding values for the max-power objective, as seen in Table~\ref{table:sum_power_adjacent} and 
Table~\ref{table:max_power_adjacent}.

The estimated probability of deployment failure obtained from 200000 simulations of the deployment for 
$\xi=0.001,0.01,0.1$ and $1$ are $0.01\%,0.145\%,0.73\%$ and $5.39\%$ respectively.

\begin{figure}[!t]
\centering
\includegraphics[scale=0.27]{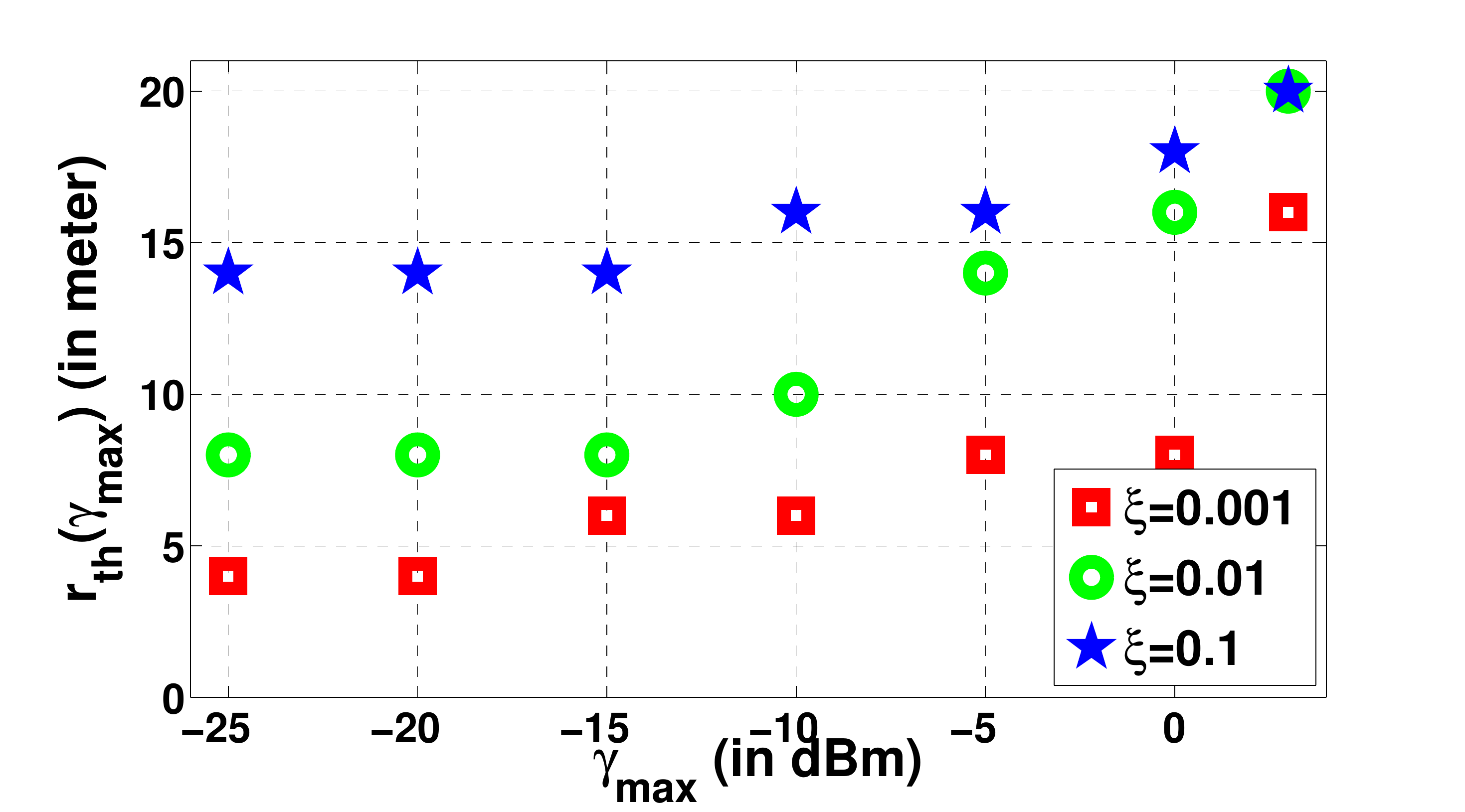}
\vspace{-2mm}
\caption{Relaying only via last placed relay: $r_{\textit{th}}(\gamma_{\textit{max}})$ 
vs. $\gamma_{\textit{max}}$, for various relay costs, $\xi$,  
for the numerical example described in Section~\ref{subsubsec:numerical_max_power_adjacent}.}
\label{fig:max_power_r_th_vs_gamma_max}
\vspace{-1mm}
\end{figure}

\begin{figure}[!t]
\centering
\includegraphics[scale=0.27]{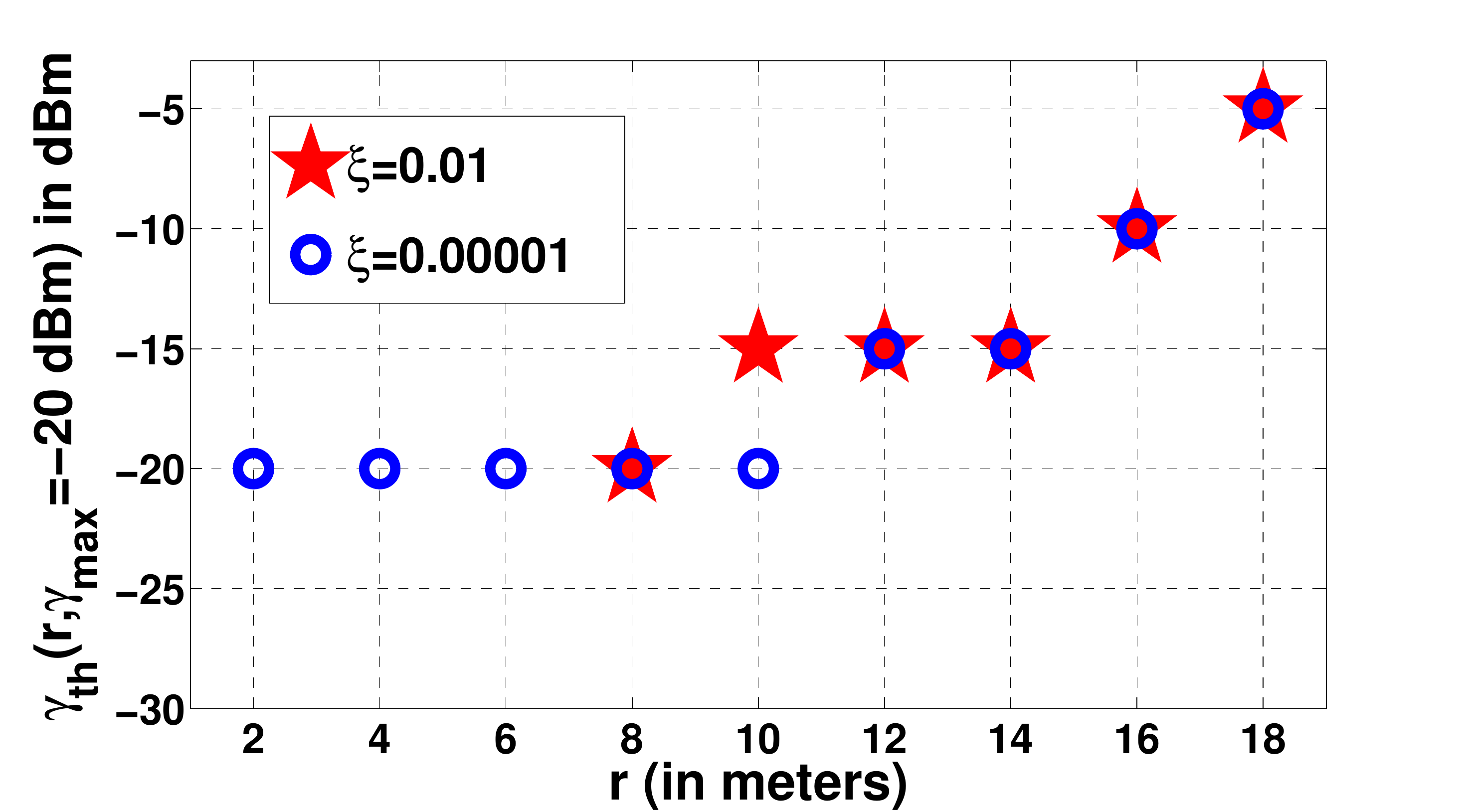}
\vspace{-2mm}
\caption{Relaying only via last placed relay: $\gamma_{th}(r,\gamma_{max})$ vs. $r$, for $\gamma_{max}=-20$ dBm and 
various relay costs, $\xi$,  
for the numerical example described in Section~\ref{subsubsec:numerical_max_power_adjacent}.}
\label{fig:max_power_gamma_th_vs_r_max_power}
\vspace{-1mm}
\end{figure}

\section{Relaying via Any Previous Node}
\label{sec:not_only_adjacent_nodes}


In Section \ref{sec:only_adjacent_nodes}, we considered the case where, after the deployment is over, 
only the links between adjacent nodes are permitted, i.e., only the links represented by the solid lines in 
Figure~\ref{fig:line-network-general} can be used. However, as discussed in Section~\ref{sec:system_model_and_notation}, 
while formulating the problem we need to take into account the fact that some relays might be skipped after deployment, 
i.e., some of the links represented by the dotted lines in Figure~\ref{fig:line-network-general} can be used. 
This section is dedicated to such formulation and exploration of the 
structural properties of the relay placement policies for different objectives.

\subsection{Sum-Power Objective}

\subsubsection{Problem Definition}\label{subsubsection:problem_sum_power_shortest_path}

\begin{figure*}[!t]
\footnotesize
\begin{eqnarray}
& & J_{\xi}\bigg(\{y_k\}_{k=1}^n; \{P^{(k)}\}_{k=1}^n; \{\gamma^{(k)}\}_{k=1}^n \bigg)
= \min \bigg\{ \xi+ \theta \mathbb{E} \min \bigg\{\min_{k \in \{1,2,\cdots,n-1\}} (\Gamma_{y_k+1}+P^{(k)}), 
\Gamma_{1}+ \min_{k \in \{1,2,\cdots,n\}} (\gamma^{(k)}+P^{(k)}) \bigg\}\nonumber\\
&+& (1-\theta) \mathbb{E} J_{\xi} \bigg(1,y_1+1,\cdots,y_{n-1}+1;\min_{k \in \{1,\cdots,n\}} (\gamma^{(k)}+P^{(k)}), P^{(1)},\cdots,P^{(n-1)}
; \Gamma_{1}, \Gamma_{y_1+1},\cdots,\Gamma_{y_{n-1}+1}\bigg), \nonumber\\
&& \theta \mathbb{E} \min_{k \in \{1,\cdots,n\}} \bigg(\Gamma_{y_k+1}+P^{(k)}\bigg)
+(1-\theta) \mathbb{E} J_{\xi} \bigg( \{y_k+1\}_{k=1}^n; \{P^{(k)}\}_{k=1}^n; \{\Gamma_{y_k+1}\}_{k=1}^n \bigg) \bigg\}
\label{eqn:bellman_sum_power_shortest_path}
\end{eqnarray}
\normalsize
\end{figure*}

\begin{figure*}[!t]
\vspace{-6mm}
\footnotesize
\begin{eqnarray}
& &J_{\xi}\bigg(\{y_k\}_{k=1}^m; \{P^{(k)}\}_{k=1}^m; \{\gamma^{(k)}\}_{k=1}^m \bigg)
= \min \bigg\{ \xi+ \theta \mathbb{E} \min \bigg\{\min_{k \in \{1,2,\cdots,m\}} (\Gamma_{y_k+1}+P^{(k)}), 
\Gamma_{1}+ \min_{k \in \{1,2,\cdots,m\}} (\gamma^{(k)}+P^{(k)}) \bigg\}\nonumber\\
&+& (1-\theta) \mathbb{E} J_{\xi} \bigg(1,y_1+1,\cdots,y_{m}+1;\min_{k \in \{1,\cdots,m\}} (\gamma^{(k)}+P^{(k)}), P^{(1)},\cdots,P^{(m)}
; \Gamma_{1}, \Gamma_{y_1+1},\cdots,\Gamma_{y_{m}+1}\bigg), \nonumber\\
&& \theta \mathbb{E} \min_{k \in \{1,\cdots,m\}} \bigg(\Gamma_{y_k+1}+P^{(k)}\bigg)
+(1-\theta) \mathbb{E} J_{\xi} \bigg( \{y_k+1\}_{k=1}^m; \{P^{(k)}\}_{k=1}^m; \{\Gamma_{y_k+1}\}_{k=1}^m \bigg) \bigg\}
\label{eqn:bellman_sum_power_shortest_path_less_than_memory}
\end{eqnarray}
\normalsize
\hrule
\end{figure*}

Given a deployment of $N$ relays, indexed $1, 2, \cdots, N,$ consider the directed acyclic graph on these 
relays along with the sink (Node $0$) and the source (Node $N+1$), whose links are all directed edges 
from each node to every node with smaller index. 
Hence, if $i$ and $j$ are two nodes with $i>j$, there is only one link $(i,j)$ between them.
Consider all directed acyclic paths 
from the source to sink, on this graph.
Let us denote by $\mathbf{p}$ any arbitrary directed acyclic path from the source to the sink, 
and  by $\mathbf{E}(\mathbf{p})$ the 
set of (directed) links of the path $\mathbf{p}$.
We also define   
$\mathcal{P}_n:=\{\mathbf{p}:(i,j) \in \mathbf{E}(\mathbf{p}) \implies i>j, |i-j| \leq n\}$ a 
subcollection of paths between the source and the sink on the directed acyclic graph, 
such that no path in $\mathcal{P}_n$ contains a link between two nodes whose indices differ by a number larger that $n$. 
We call $n$ the ``memory'' of the class of policies we are considering.


Here we consider the following problem:
\begin{eqnarray}
 \min_{\pi \in \Pi} \mathbb{E}_{\pi} \bigg( \min_{\mathbf{p} \in \mathcal{P}_n} \sum_{e \in \mathbf{E}(\mathbf{p})} \Gamma^{(e)} + \xi N \bigg) 
\label{eqn:unconstrained_sum_power_problem_shortest}
\end{eqnarray}
where $\Gamma^{(e)}$ is the power used on the link $e$. 
We call $\sum_{e \in \mathbf{E}(\mathbf{p})} \Gamma^{(e)}$ the ``length" of the path $\mathbf{p}$, and 
$\min_{\mathbf{p} \in \mathcal{P}_n} \sum_{e \in \mathbf{E}(\mathbf{p})} \Gamma^{(e)}$ the length of the 
``shortest path" from the source to the sink (over the relays deployed by policy $\pi$ in a given 
realization of the decision process).

\begin{figure*}[!t]
\footnotesize
\begin{eqnarray}
& & J_{\xi}\bigg(\{y_k\}_{k=1}^n; \{P^{(k)}\}_{k=1}^n; \{\gamma^{(k)}\}_{k=1}^n \bigg)
= \min \bigg\{ \xi+ \theta \mathbb{E} \min \bigg\{\min_{k \in \{1,2,\cdots,n-1\}} \max\{\Gamma_{y_k+1},P^{(k)}\}, 
\max\{\Gamma_{1}, \min_{k \in \{1,2,\cdots,n\}} \max\{\gamma^{(k)},P^{(k)}\}\} \bigg\}\nonumber\\
&+& (1-\theta) \mathbb{E} J_{\xi} \bigg(1,y_1+1,\cdots,y_{n-1}+1;\min_{k \in \{1,\cdots,n\}} \max\{\gamma^{(k)},P^{(k)}\}, P^{(1)},\cdots,P^{(n-1)}
; \Gamma_{1}, \Gamma_{y_1+1},\cdots,\Gamma_{y_{n-1}+1}\bigg), \nonumber\\
&& \theta \mathbb{E} \min_{k \in \{1,\cdots,n\}} \max\{\Gamma_{y_k+1},P^{(k)}\}
+(1-\theta) \mathbb{E} J_{\xi} \bigg( \{y_k+1\}_{k=1}^n; \{P^{(k)}\}_{k=1}^n; \{\Gamma_{y_k+1}\}_{k=1}^n \bigg) \bigg\}
\label{eqn:bellman_max_power_shortest_path}
\end{eqnarray}
\normalsize
\end{figure*}

\begin{figure*}[!t]
\footnotesize
\vspace{-5mm}
\begin{eqnarray}
& &J_{\xi}\bigg(\{y_k\}_{k=1}^m; \{P^{(k)}\}_{k=1}^m; \{\gamma^{(k)}\}_{k=1}^m \bigg)
= \min \bigg\{ \xi+ \theta \mathbb{E} \min \bigg\{\min_{k \in \{1,2,\cdots,m\}} \max\{\Gamma_{y_k+1},P^{(k)}\}, 
\max\{\Gamma_{1}, \min_{k \in \{1,2,\cdots,m\}} \max\{\gamma^{(k)},P^{(k)}\}\} \bigg\}\nonumber\\
&+& (1-\theta) \mathbb{E} J_{\xi} \bigg(1,y_1+1,\cdots,y_{m}+1;\min_{k \in \{1,\cdots,m\}} \max\{\gamma^{(k)},P^{(k)}\}, P^{(1)},\cdots,P^{(m)}
; \Gamma_{1}, \Gamma_{y_1+1},\cdots,\Gamma_{y_{m}+1}\bigg), \nonumber\\
&& \theta \mathbb{E} \min_{k \in \{1,\cdots,m\}} \max\{\Gamma_{y_k+1},P^{(k)}\}
+(1-\theta) \mathbb{E} J_{\xi} \bigg( \{y_k+1\}_{k=1}^m; \{P^{(k)}\}_{k=1}^m; \{\Gamma_{y_k+1}\}_{k=1}^m \bigg) \bigg\}
\label{eqn:bellman_max_power_shortest_path_less_than_memory}
\end{eqnarray}
\normalsize
\hrule
\end{figure*}

\subsubsection{MDP Formulation}\label{subsubsection:mdp_formulation_sum_power_shortest_path}
Consider the evolution of the network as the relays are deployed.
Suppose that at some point in the deployment process there are $m$
preceding nodes, {\em including the sink}; see
Figure~\ref{fig:measurement-based-relay-placement} where $m=4$.  The transmit
power required to establish a link from the current location to the $k$-th previous node is
denoted by $\gamma^{(k)}$, and the distance of the current location
from the $k$-th previous node is denoted by $y_k$. 
Let $P^{(k)}_n$ denote the
length of the ``shortest path" from the $k$-th previous node to the
sink. We define $P^{(m)}_n:=0$ if $m \leq n$, i.e., the length of the
shortest path from the sink to itself is $0$ (when $m \leq n$, the
$m$-th previous node is the sink).  For notational simplicity, we drop
the subscript and denote $P^{(k)}_n$ by $P^{(k)}$. 
The deployment operative decides whether to place a node at his
current position based on (i) the powers $\gamma^{(1)}, \gamma^{(2)},\cdots,\gamma^{(n)}$, (ii) the distances 
$y_1,y_2,\cdots,y_n$, and 
(iii) the length of the shortest paths $P^{(1)},P^{(2)},\cdots,P^{(n)}$. If $n=2$, at the
``current location'' shown in
Figure~\ref{fig:shortest-path-measurement}, the decision will
be based on the powers $\gamma^{(1)}$, $\gamma^{(2)}$, the distances
$y_1$, $y_2$, and the shortest paths $P^{(1)}$ and $P^{(2)}$ at nodes $3$ and $2$ respectively. 
However, in case $m<n$, we do not have measurements for 
$n$ previous nodes. Hence, let us define $l_m:=\min\{m,n\}$. At each step, the deployment operative 
knows the distance $\{y_k\}_{k=1}^{l_m}$, the power $\{\gamma^{(k)}\}_{k=1}^{l_m}$ and the lengths of the shortest 
paths $\{P^{(k)}\}_{k=1}^{l_m}$. He decides based on this information whether to place a 
relay at the current position or not. We formulate this
problem as an MDP with state $(\{y_k\}_{k=1}^{l_m};\{P^{(k)}\}_{k=1}^{l_m};\{\gamma^{(k)}\}_{k=1}^{l_m})$, and the action
space $\{\textit{place}, \, \textit{do not place}\}$. The state at the sink is denoted by 
$\mathbf{0}$. Since the set $\mathcal{S}$ of transmit power levels is countable, $\{P^{(k)}\}_{k=1}^{l_m}$ 
also take values from a countable set. Hence, the state space is countable in our problem. 

\begin{figure}[!t]
\centering
\includegraphics[scale=0.30]{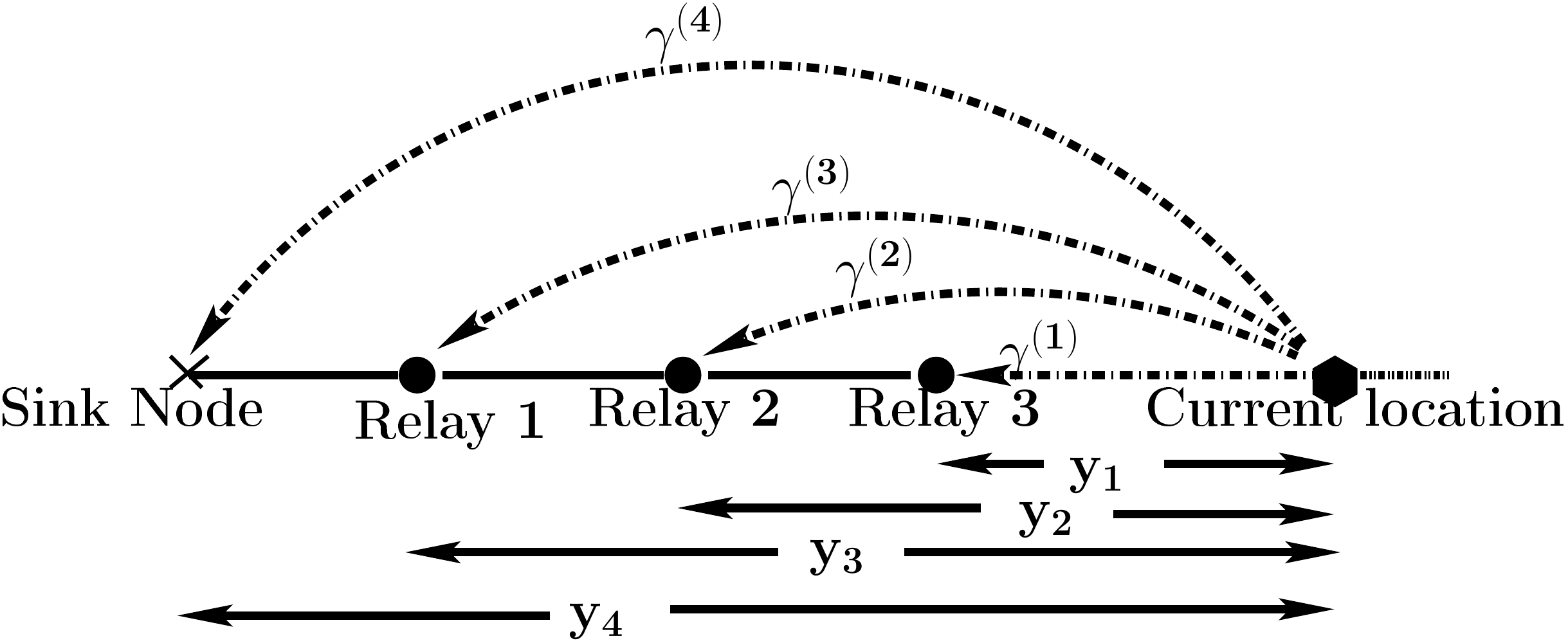}
\vspace{-2mm}
\caption{Measurement-based sequential relay placement. When standing at the 
``current location," the deployment operative, having already deployed Relays $1$, $2$, and $3$,  
makes the power measurements $\gamma^{(k)}$, and knows the distances $y_k$.}
\label{fig:measurement-based-relay-placement}
\vspace{-7mm}
\end{figure}

\begin{figure}[!t]
\centering
\includegraphics[scale=0.32]{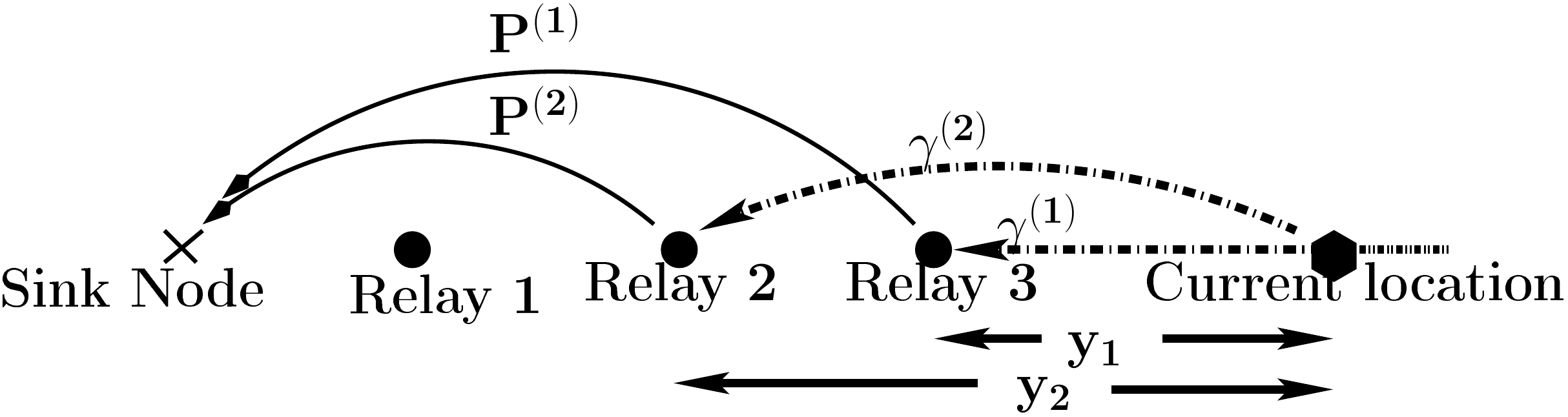}
\vspace{-2mm}
\caption{Sequential deployment of relay nodes with $n=2$. When standing at the
``current location,'' the deployment operative, having already deployed Relays $1$,
$2$, and $3$, makes the power measurements $\gamma^{(1)}$, $\gamma^{(2)}$, and knows the distances $y_1$, $y_2$ and 
the lengths $P^{(1)}$, $P^{(2)}$ of the shortest paths from relay $3$ and relay $2$ to the sink node.}
\label{fig:shortest-path-measurement}
\vspace{-4mm}
\end{figure}

If the state is $(\{y_k\}_{k=1}^{l_m};\{P^{(k)}\}_{k=1}^{l_m};\{\gamma^{(k)}\}_{k=1}^{l_m})$
and a relay is placed, the relay cost $\xi$ is incurred. 
The power cost is incurred only after the source is placed, and that cost will be the length of the 
shortest path in $\mathcal{P}_n$ from the source to the sink. 
Let us define $J_{\xi}(\{y_k\}_{k=1}^{l_m};\{P^{(k)}\}_{k=1}^{l_m};\{\gamma^{(k)}\}_{k=1}^{l_m})$ 
and $J_{\xi}(\mathbf{0})$ to be the
optimal expected cost-to-go starting from state $(\{y_k\}_{k=1}^{l_m};\{P^{(k)}\}_{k=1}^{l_m};\{\gamma^{(k)}\}_{k=1}^{l_m})$ and 
state $\mathbf{0}$ respectively.

\subsubsection{Bellman Equation}\label{subsubsec:bellman_sum_power_shortest_path}
Note that here again we have an infinite horizon total cost MDP with a countable state space, finite action space and 
nonnegative single-stage cost. Hence, 
the optimal value function $J_{\xi}(\cdot)$ satisfies the Bellman equations 
(\ref{eqn:bellman_sum_power_shortest_path}) for $m \geq n$ 
and (\ref{eqn:bellman_sum_power_shortest_path_less_than_memory}) for $m<n$, for the optimal 
cost function. The first term in the $\min \{\cdot,\cdot\}$ is the cost if we place a relay at the state 
$( \{y_k\}_{k=1}^{l_m}; \{P^{(k)}\}_{k=1}^{l_m}; \{\gamma^{(k)}\}_{k=1}^{l_m}  )$, and the second term is the cost if we do not place a relay. 

Observe that it is never optimal to place a relay at state $\mathbf{0}$ because, 
in doing so, a cost $\xi$ will unnecessarily be incurred. Hence, 
$J_{\xi}(\mathbf{0})=\theta \mathbb{E}(\Gamma_1)+(1-\theta)\mathbb{E}J_{\xi}(1;0;\Gamma_1)$.

When $m \geq n$, if we place a relay at the current location and if the line ends in the next step, the length of the shortest path from the source 
to the sink will be seen as a terminal cost, and is equal to 
$\mathbb{E} \min \{\min_{k \in \{1,\cdots,n-1\}} (\Gamma_{y_k+1}+P^{(k)}), \Gamma_{1}+ \min_{k \in \{1,\cdots,n\}} (\gamma^{(k)}+P^{(k)}) \}$. 
Note that in this case the shortest path from the source to the sink can pass via the relay placed at the ``current location'', 
or via one of the $(n-1)$ previous relays. For example, in the scenario shown in Figure~\ref{fig:shortest-path-measurement} (with $n=2$), 
if we place a relay at the ``current location'' and the line ends at the next step, then the neighbouring node 
of the source along the shortest path can be the relay placed at the ``current location'' or relay~$3$ (source is not 
allowed to transmit directly to relay~$2$ because $n=2$).
 Keeping this in mind, $\Gamma_{y_k+1}+P^{(k)}$ is the sum of two costs: the (random) power $\Gamma_{y_k+1}$ 
from the source to the $(k+1)$-st previous node w.r.t the source 
(after placing the relay at the current location, the current $k$-th previous node 
will become the $(k+1)$-st previous node in the next step, where the source will be placed) 
and the length of the shortest path $P^{(k)}$ 
from that node to the sink. 
$\Gamma_{1}+ \min_{k \in \{1,\cdots,n\}} (\gamma^{(k)}+P^{(k)})$ is the sum of the random power $\Gamma_{1}$ 
required to establish a link from the source 
to the relay deployed at the current location, and the length of the shortest path from this relay to the sink.

When $m \geq n$, if we place a relay at the current location and the line does not end in 
the next step, the terms $y_n$, $P^{(n)}$ and $\gamma^{(n)}$ disappear from the state (because a new relay has been 
placed, which must be taken into account in the state) and the distance $1$ of the 
next location from the newly placed relay at the current location is absorbed into the state. 
Other distances in the state increase by $1$ each. 
The length of the shortest path from the newly placed relay to the sink, 
i.e., $\min_{k \in \{1,\cdots,n\}} (\gamma^{(k)}+P^{(k)})$ enters the state, 
and the power required at the next location to connect to the $n$ previous relays (w.r.t the next location) are 
independently sampled again. Hence, keeping in mind that $\Gamma_{r}$ is the random power required 
to establish a link between two nodes at a distance $r$, the new state becomes:

\begin{eqnarray*}
 & & (1,y_1+1,\cdots,y_{n-1}+1;\min_{k \in \{1,\cdots,n\}} (\gamma^{(k)}+P^{(k)}), \\
 & & P^{(1)},\cdots,P^{(n-1)};\Gamma_{1}, \Gamma_{y_1+1},\cdots,\Gamma_{y_{n-1}+1})
\end{eqnarray*}

Similarly, if $m \geq n$ and we do not place a relay at the current location, 
in the next step the line may end with probability $\theta$ and may not end with 
probability $(1-\theta)$. If the line ends, a cost of the shortest path $\mathbb{E} \min_{k \in \{1,\cdots,n\}} (\Gamma_{y_k+1}+P^{(k)})$ 
will be incurred. If the line does not 
end, the next state will be the random tuple $( \{y_k+1\}_{k=1}^n; \{P^{(k)}\}_{k=1}^n; \{\gamma_k\}_{k=1}^n )$, where 
for each $k \in \{1,\cdots,n\}$, $\gamma_k$ will be drawn independently from each other 
from the distribution $G_{y_k+1}(\cdot)$.\footnote{It is to be 
noted that all the $\Gamma_{\cdot}$ terms appearing in (\ref{eqn:bellman_sum_power_shortest_path}), 
 (\ref{eqn:bellman_sum_power_shortest_path_less_than_memory}), (\ref{eqn:bellman_max_power_shortest_path}) 
and (\ref{eqn:bellman_max_power_shortest_path_less_than_memory}) are independent of each other.} 
 
Similar arguments can be used to explain (\ref{eqn:bellman_sum_power_shortest_path_less_than_memory}) in case $m<n$. The 
difference is that if we place a relay at the current location and the line does not end in the next step, 
the next state will have three more terms, since the information for the newly 
placed relay can be accomodated into the state. On the other hand, if the line ends in the next step, the source 
will be able to communicate to the sink via one of the $m$ relays (there will be $(m+1)$ preceding nodes, including the sink).

%

\subsubsection{Results and Discussion}\label{subsubsection:results_discussion_sum_power_shortest_path}

\begin{thm}{\em Policy Structure:}\label{theorem:policy_structure_sum_power_shortest_path}
For the state $( \{y_k\}_{k=1}^{l_m}; \{P^{(k)}\}_{k=1}^{l_m}; \{\gamma^{(k)}\}_{k=1}^{l_m}  )$, the optimal relay placement policy 
is the following:

Place a relay if and only if $\min_{k \in \{1,\cdots,l_m\}} (\gamma^{(k)}+P^{(k)}) \leq c(\{y_k\}_{k=1}^{l_m};\{P^{(k)}\}_{k=1}^{l_m})$ 
where $c(\{y_k\}_{k=1}^{l_m};\{P^{(k)}\}_{k=1}^{l_m})$ is a threshold value.
\end{thm}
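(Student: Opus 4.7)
The plan is to isolate how the current-position power measurements $\{\gamma^{(k)}\}$ enter the Bellman recursion. Define $w := \min_{k \in \{1,\ldots,l_m\}} (\gamma^{(k)} + P^{(k)})$, which is the length of the shortest path from the current location to the sink \emph{if a relay were placed now}. The key observation is that in the place branch of (\ref{eqn:bellman_sum_power_shortest_path}) (and similarly in (\ref{eqn:bellman_sum_power_shortest_path_less_than_memory})) the variables $\{\gamma^{(k)}\}$ enter only through $w$: in the terminal-cost term via $\Gamma_1 + w$, and in the continuation value where $w$ occupies the first shortest-path slot of the next state (it becomes the shortest-path length from the newly placed relay to the sink). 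In contrast, the don't-place branch does not depend on $\{\gamma^{(k)}\}$ at all, since at the next step fresh powers $\{\Gamma_{y_k+1}\}$ are independently drawn.

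First I would establish, via value iteration, that $J_\xi$ is non-decreasing in each of its shortest-path arguments $P^{(k)}$. Starting from $J^{(0)}_\xi \equiv 0$ (trivially monotone), a standard induction shows that each iterate is non-decreasing in each $P^{(k)}$: the dependence in the don't-place branch is direct and through non-decreasing maps; in the place branch, $w$ and each $\Gamma_{y_k+1}+P^{(k)}$ term in the terminal cost are non-decreasing in $P^{(k)}$, and the inductive hypothesis transports monotonicity through the continuation. By an argument analogous to Lemma~\ref{lemma:convergence_value_iteration_sum_power}, $J^{(k)}_\xi \uparrow J_\xi$ pointwise, so $J_\xi$ inherits this monotonicity.

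Combining these two observations, with $\{y_k\}$ and $\{P^{(k)}\}$ held fixed, the place cost $c_p$ is a non-decreasing function of $w$, while the don't-place cost $c_{np}$ does not depend on $w$ (nor on any individual $\gamma^{(k)}$). Hence the inequality $c_p \leq c_{np}$ is equivalent to $w \leq c(\{y_k\};\{P^{(k)}\})$ for a well-defined threshold (possibly $\pm\infty$), which is precisely the asserted threshold structure.

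The main obstacle will be the careful bookkeeping needed to carry the monotonicity induction through the re-indexing that occurs when a relay is placed: the $n$-th previous relay's coordinates are dropped while the newly placed relay's coordinates (with the shortest-path value $w$) are prepended. One must verify that the inductive hypothesis is stated uniformly across all coordinate positions so that it remains valid after this reshuffling, and that monotonicity is preserved under both the $\theta$ and $(1-\theta)$ terms. The regime $m<n$ has to be handled separately, but the argument proceeds identically since (\ref{eqn:bellman_sum_power_shortest_path_less_than_memory}) exhibits the same dependence of the place branch on $w$ and the same absence of $\{\gamma^{(k)}\}$ from the don't-place branch.
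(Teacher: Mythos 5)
Your proposal is correct and follows essentially the same route as the paper: establish by induction on the value iteration that $J_{\xi}$ is monotone in its (shortest-path) arguments, observe that the place cost depends on the measured powers only through $\min_{k}(\gamma^{(k)}+P^{(k)})$ and is non-decreasing in that quantity while the don't-place cost is independent of the $\gamma^{(k)}$'s, and conclude the threshold structure from $c_p \leq c_{np}$. Your explicit remark that the don't-place branch is free of the current power measurements (since next-step powers are redrawn) is a point the paper leaves implicit, but it is the same argument.
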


\begin{proof}
 See Appendix~\ref{appendix:not_only_adjacent_nodes}.
\end{proof}

{\em Discussion of the Policy Structure:} The structure of the optimal policy as 
stated in Theorem \ref{theorem:policy_structure_sum_power_shortest_path}
is intuitive because here we need to check whether the quantity $\min_{k \in \{1,\cdots,l_m\}} (\gamma^{(k)}+P^{(k)})$ 
which is the length of the shortest path from the current location of the 
deployment operative to the sink, is below a certain threshold. 

{\em \textbf{Remarks:}}
\begin{itemize}
 \item  The optimal cost $J_{\xi}(\mathbf{0})$ of (\ref{eqn:unconstrained_sum_power_problem_shortest}) 
is always less than or equal to that of 
(\ref{eqn:unconstrained_total_power_problem}), if the relay price $\xi$ is same in both cases. This is because 
each policy for $n=1$ will be a policy for $n=2$ as well.
\item $n=\infty$ provides the best policy since there we 
consider information from all previous nodes. 
\end{itemize}

\textbf{Observation:} With $n=1$, $r:=y_1$ and $\gamma:=\gamma^{(1)}$, the Bellman equation 
(\ref{eqn:bellman_sum_power_shortest_path}) reduces to:

\footnotesize
\begin{eqnarray}
 J_{\xi}\bigg(r; P^{(1)}; \gamma \bigg)
&=& \min \bigg\{ \xi+ \theta \mathbb{E} (\Gamma_{1}+ \gamma+P^{(1)}) + \nonumber\\
&&(1-\theta) \mathbb{E} J_{\xi} \bigg(1;(\gamma+P^{(1)}); \Gamma_{1}\bigg), \nonumber\\
 &&\theta \mathbb{E} \bigg(\Gamma_{r+1} + P^{(1)}\bigg)\nonumber\\
&+&(1-\theta) \mathbb{E} J_{\xi} \bigg( r+1; P^{(1)}; \Gamma_{r+1} \bigg) \bigg\}\label{eqn:bellman_sum_power_shortest_path_from_n_one}
\end{eqnarray}
\normalsize

Note that $J_{\xi}(r; P^{(1)}; \gamma )=P^{(1)}+J_{\xi}(r; 0; \gamma )$. Let us denote 
$J_{\xi}(r; 0; \gamma ):=J_{\xi}(r, \gamma )$. Now we can rewrite 
(\ref{eqn:bellman_sum_power_shortest_path_from_n_one}) as:

\footnotesize
\begin{eqnarray}
& & P^{(1)}+J_{\xi}\bigg(r, \gamma \bigg)
= P^{(1)}+\min \bigg\{ \xi+ \gamma+ \theta \mathbb{E} (\Gamma_{1})+ \mathbb{E} J_{\xi} \bigg(1, \Gamma_{1}\bigg), \nonumber\\
&& \theta \mathbb{E} \bigg(\Gamma_{r+1}\bigg)
+(1-\theta) \mathbb{E} J_{\xi} \bigg( r+1,\Gamma_{r+1} \bigg) \bigg\}
\end{eqnarray}
\normalsize

Thus, we obtain the Bellman equation (\ref{eqn:bellman_sum_power}).

\subsection{Max-Power Objective}
Here we are going to address the following problem:
\begin{eqnarray}
 \min_{\pi \in \Pi} \mathbb{E}_{\pi} \bigg( \min_{\mathbf{p} \in \mathcal{P}_n} \max_{e \in \mathbf{E}(\mathbf{p})} \Gamma^{(e)}+\xi N \bigg) 
\label{eqn:shortest_path_max_power_problem}
\end{eqnarray}

We call $\max_{e \in \mathbf{E}(\mathbf{p})} \Gamma^{(e)}$ the ``length" of the path $\mathbf{p}$, and 
$\min_{\mathbf{p} \in \mathcal{P}_n} \max_{e \in \mathbf{E}(\mathbf{p})} \Gamma^{(e)}$ the length of the 
``shortest path" from the source to the sink.
Using notation and arguments similar to those used in problem 
(\ref{eqn:unconstrained_sum_power_problem_shortest}), we can write the Bellman equations 
(\ref{eqn:bellman_max_power_shortest_path}) and (\ref{eqn:bellman_max_power_shortest_path_less_than_memory}) 
and derive the structure of the optimal node placement policy:

\begin{thm}{\em Policy Structure:}\label{theorem:policy_structure_max_power_shortest_path}
For the state $( \{y_k\}_{k=1}^{l_m}; \{P^{(k)}\}_{k=1}^{l_m}; \{\gamma^{(k)}\}_{k=1}^{l_m}  )$, the optimal relay placement policy 
is the following:

Place a relay if and only if $\min_{k \in \{1,\cdots,l_m\}} \max \{\gamma^{(k)},P^{(k)}\} \leq c(\{y_k\}_{k=1}^{l_m};\{P^{(k)}\}_{k=1}^{l_m})$ 
where $c(\{y_k\}_{k=1}^{l_m};\{P^{(k)}\}_{k=1}^{l_m})$ is a threshold value.
\end{thm}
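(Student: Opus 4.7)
The plan is to mirror the proof of Theorem~\ref{theorem:policy_structure_sum_power_shortest_path}, with $\max$ replacing ordinary addition at the appropriate places. First I would inspect the Bellman equation~(\ref{eqn:bellman_max_power_shortest_path}) and observe two structural facts. (i) The ``do not place'' cost $c_{np}$ depends only on $\{y_k\}$ and $\{P^{(k)}\}$, and not on the current measurements $\{\gamma^{(k)}\}$, since deferring the decision discards the $\gamma^{(k)}$ and re-samples $\{\Gamma_{y_k+1}\}$ at the next step. (ii) The ``place'' cost $c_p$ depends on $\{\gamma^{(k)}\}$ only through the aggregate
$S := \min_{k \in \{1,\dots,l_m\}} \max\{\gamma^{(k)}, P^{(k)}\}$,
which appears both in the terminal term $\max\{\Gamma_1, S\}$ and, after placement, as the new $P^{(1)}$ (the shortest-path cost from the freshly placed relay back to the sink) in the next-state value function.

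Next I would show that $c_p$ is non-decreasing in $S$ with $(\{y_k\},\{P^{(k)}\})$ held fixed. The terminal term is obviously so; for the continuation term I need $J_\xi(\{y_k\};\{P^{(k)}\};\{\gamma^{(k)}\})$ to be non-decreasing in each $P^{(k)}$. I would establish this by induction along the value iteration associated with~(\ref{eqn:bellman_max_power_shortest_path}) and~(\ref{eqn:bellman_max_power_shortest_path_less_than_memory}), starting from $J^{(0)}_\xi\equiv 0$. At the inductive step, each occurrence of $P^{(k)}$ in the recursion --- inside $\max\{\Gamma_{y_k+1},P^{(k)}\}$, inside $\max\{\gamma^{(k)},P^{(k)}\}$, preserved as a $P$-coordinate in the don't-place successor, shifted to $P^{(k+1)}$ in the place successor, or folded into the new $P^{(1)}$ through the outer $\min$-of-$\max$ --- preserves the monotonicity hypothesis. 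A convergence argument identical to Lemma~\ref{lemma:convergence_value_iteration_max_power} transfers the property from $J^{(k)}_\xi$ to the limit $J_\xi$.

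With these two ingredients in place, the theorem is immediate. For each fixed $(\{y_k\},\{P^{(k)}\})$, regard $c_p$ as a non-decreasing function of $S$ while $c_{np}$ is a constant in $S$, so the placement region $\{S : c_p\le c_{np}\}$ is a lower half-line of the form $\{S \le c(\{y_k\};\{P^{(k)}\})\}$ with $c(\{y_k\};\{P^{(k)}\}) := \sup\{S : c_p(S)\le c_{np}\}$ (allowing the extreme values $\pm\infty$). That is exactly the threshold characterisation claimed. The case $m<n$ is handled verbatim using~(\ref{eqn:bellman_max_power_shortest_path_less_than_memory}) with the natural re-indexing of the post-placement state.

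The main obstacle I anticipate is the bookkeeping inside the inductive monotonicity step: after placement the $P$-coordinates are all re-indexed, the oldest one is dropped, and a new $P^{(1)}$ is synthesised as $\min_k \max\{\gamma^{(k)},P^{(k)}\}$. One must verify that raising any single $P^{(k_0)}$ weakly raises \emph{every} coordinate of the successor state along which $J_\xi$ has already been shown to be monotone; in particular the new $P^{(1)}$ is weakly monotone in each $P^{(k_0)}$ because $\min$ and $\max$ are each coordinate-wise non-decreasing. Once this verification is clean, the threshold structure follows mechanically, exactly as in the sum-power case.
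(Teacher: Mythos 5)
Your argument is essentially the paper's own: the paper proves the sum-power analogue (and declares the max-power case analogous) by showing via value-iteration convergence that $J_{\xi}$ is increasing in its arguments, observing that the ``place'' term of the Bellman equation depends on the $\gamma^{(k)}$'s only through the aggregate $\min_{k}\max\{\gamma^{(k)},P^{(k)}\}$ and is increasing in it while the ``do not place'' term is free of the $\gamma^{(k)}$'s, and concluding the threshold structure. Your write-up simply makes explicit the bookkeeping (re-indexing of the $P$-coordinates, the re-sampling of powers in the continuation) that the paper leaves implicit, so it is correct and follows the same route.
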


\textbf{Remark:} Note that the Bellman equation (\ref{eqn:bellman_max_power}) can be derived from 
(\ref{eqn:bellman_max_power_shortest_path}), with $n=1$, $r:=y_1$, $\gamma:=\gamma^{(1)}$ and $P^{(1)}=\gamma_{max}$.

\subsection{Performance comparison between $n=1$ and $n=2$}

We have made a comparative study of the performance of the optimal policies with memory $1$ and the policies with memory $2$. 
The results are shown in Table~\ref{table:comparison_memory_one_vs_two}. 
Here we have used the same model as used in Section~\ref{sec:only_adjacent_nodes} in the max-power case, {\em but we have 
considered $\mathcal{S}=\{0.1,0.2,\cdots,2\}$ mW in the sum-power case in order to avoid huge computational 
requirement\footnote{If the transmit power levels in mW are integer multiples of some basic power level, the lengths of the 
shortest paths will also be integer multiples of that basic power level. If the transmit power levels do not satisfy this property, 
the number of possible shortest paths can be very large, leading to enormous computational complexity. This case will not arise 
in the max-power case since, in that case, a shortest path will always take its values from the set $\mathcal{S}$.}.} 
The study suggests that that, for small relay cost, $n=2$ can provide a significant percentage gain over the optimal cost for $n=1$.  
Since at small $\xi$ we tend to place more relays 
(but the relay cost is small compared to $J_{\xi}(\mathbf{0})$, see Table~\ref{table:sum_power_adjacent} and 
Table~\ref{table:max_power_adjacent}), 
skipping relays could be useful. For large $\xi$, we place 
very few relays, but the relay cost will dominate. As $\xi$ becomes very high, we will always place the relays periodically 
at every $10$ steps, and nowhere else; hence the relay cost becomes independent of $n$. 
The little variation in power cost will be insignificant 
compared to large amount of relay cost.

\begin{table}[t!]
\centering
\begin{tabular}{|c |c |c |c|c|}
\hline
 & $\xi=0.001$ & $\xi=0.01$ & $\xi=0.1$ & $\xi=1$\\ \hline
Sum-power  & 0.61946  &  0.65759  & 1.0414   & 4.49396\\ 
($n=1$) & & & & \\ \hline
Sum-power  &  0.50723 &  0.56834  &  1.0233  & 4.43836\\ 
($n=2$) & & & & \\ \hline
Max-power  & 0.03336  &  0.13124   & 0.61693   & 4.10798\\ 
($n=1$) & & & & \\ \hline
Max-power  &   &    &    & \\ 
($n=2$) &0.02119 &0.10686 & 0.60548 & 4.09718\\ \hline
\end{tabular}
\caption{Comparison of $J_{\xi}(\mathbf{0})$ for $n=1$ and $n=2$ for different $\xi$: Sum-power objective and max-power objective. 
Two different transmit power sets are used for the two objectives.}
\vspace{-0.5cm}
\label{table:comparison_memory_one_vs_two}
\end{table}

\subsection{Computational Issues}\label{computational_issues}
The dimension of the state space is $3n$ (increasing in $n$) in the value iteration, 
and hence the computational complexity and memory requirement increases with $n$. 
However, for any arbitrary $n$, we can reduce the value iteration to a function iteration in the same way as in 
(\ref{eqn:new_iteration_sum_power}) and (\ref{eqn:function_iteration_max_power}), and reduce the dimension 
of the domain of the function to $2n$ (instead of $3n$ in the value iteration).

\section{Conclusion}\label{sec:conclusion}

In this paper, we explored several sequential relay placement
problems for as-you-go deployment of wireless relay networks, assuming very light traffic. 
The problems were formulated as MDPs, optimal policies were derived, and the procedure illustrated via numerical examples. 
 There are numerous issues to improve upon: 
(i) the light traffic (``lone packet model") assumption, (ii) the assumption of independent shadow fading from link to link, and 
(iii) the deployment failure issue. 
Extension to positive traffic might require a different approach: perhaps one that requires a performance analysis model 
working in conjunction with an optimal sequential decision technique. We are addressing these issues in our ongoing work.

\renewcommand{\thesubsection}{\Alph{subsection}}

\appendices

\section{Only Links between adjacent nodes permitted}\label{appendix:only_adjacent_nodes}

\textbf{Proof of Lemma \ref{lemma:convergence_value_iteration_sum_power}:} 
Here we have an infinite horizon total cost MDP with countable state space and finite action space. The assumption P of 
Chapter $3$ in \cite{bertsekas07dynamic-programming-optimal-control-2} is satisfied since the single-stage cost is nonnegative. 
Hence, by combining Proposition $3.1.5$ and Proposition $3.1.6$ of \cite{bertsekas07dynamic-programming-optimal-control-2}, 
we obtain the result.

\textbf{Proof of Lemma \ref{lem:properties_value_function_sum_power}:}
In value iteration (\ref{eqn:value_iteration_sum_power}), 
$J_{\xi}^{(0)}(r,\gamma):=0$ is concave, increasing in $\gamma$, $\xi$ and increasing in $r$ and 
$J_{\xi}^{(0)}(\mathbf{0}):=0$ is concave, increasing in $\xi$. Suppose that 
 $J_{\xi}^{(k)}(r, \gamma)$ is concave, increasing in $\gamma$, $\xi$ and increasing in $r$ and 
$J_{\xi}^{(k)}(\mathbf{0})$ is concave, increasing in $\xi$ for some $k \geq 0$. Note that 
$\mathbb{E}(\Gamma_{r+1})$ is increasing in $r$. 

Let us consider $r_1>r_2$. We can write:

\begin{eqnarray*}
&& \mathbb{E}J_{\xi}^{(k)}(r_1+1, \Gamma_{r_1+1})\\
&=& \sum_{\gamma}g(r_1+1,\gamma)J_{\xi}^{(k)}(r_1+1, \gamma)\\
&\geq& \sum_{\gamma}g(r_1+1,\gamma)J_{\xi}^{(k)}(r_2+1, \gamma)\\
&\geq& \sum_{\gamma}g(r_2+1,\gamma)J_{\xi}^{(k)}(r_2+1, \gamma)\\
&=& \mathbb{E}J_{\xi}^{(k)}(r_2+1, \Gamma_{r_2+1})
\end{eqnarray*}
where the first inequality follows from the fact that $J_{\xi}^{(k)}(r+1, \gamma)$ is increasing in $r$ for each $\gamma$, and 
the second inequality follows from the facts that $J_{\xi}^{(k)}(r+1, \gamma)$ is increasing in $\gamma$ and $G_r (\cdot)$ is 
stochastically increasing in $r$. Hence, $\mathbb{E}J_{\xi}^{(k)}(r+1, \Gamma_{r+1})$ 
is increasing in $r$. Hence, by (\ref{eqn:value_iteration_sum_power}), $J_{\xi}^{(k+1)}(r, \gamma)$ 
is increasing in $r$.

We know that the minimum of two concave, increasing functions is concave, increasing. 
Note that, each term in the $\min \{\cdot,\cdot\}$ of (\ref{eqn:value_iteration_sum_power}) is concave, increasing in $\gamma$, $\xi$. 
Hence, $J_{\xi}^{(k+1)}(r, \gamma)$ in concave, increasing in $\xi$, $\gamma$ and $J_{\xi}^{(k+1)}(\mathbf{0})$ is concave, increasing 
in $\xi$. Now, since $J_{\xi}^{(k)}(r,\gamma) \uparrow J_{\xi}(r,\gamma)$ for each $r$, $\gamma$ and 
$J_{\xi}^{(k)}(\mathbf{0}) \uparrow J_{\xi}(\mathbf{0})$ (by Lemma \ref{lemma:convergence_value_iteration_sum_power}), 
the results follow.

\textbf{Proof of Theorem \ref{theorem:policy_structure_sum_power}:}
By Proposition $3.1.3$ of \cite{bertsekas07dynamic-programming-optimal-control-2}, if there exists a stationary policy 
$\{ \mu,\mu,\cdots \}$ such that for each state $(r,\gamma)$, the action chosen by the policy is 
the action that achieves the minimum 
in the Bellman equation (\ref{eqn:bellman_sum_power}), then that stationary policy will be an optimal policy. Hence, 
it is clear that when the state is $(r,\gamma)$, 
it is optimal to place the relay if $c_p \leq c_{np}$, i.e.,
\begin{eqnarray*}
\xi+ \gamma + J_{\xi}(\mathbf{0}) \leq \theta \mathbb{E} 
(\Gamma_{r+1}) + (1-\theta) \mathbb{E} J_{\xi}\left(r+1, \Gamma_{r+1}\right)\\
\end{eqnarray*}
or,
\begin{eqnarray*}
 \gamma  \leq \theta \mathbb{E} 
(\Gamma_{r+1}) + (1-\theta) \mathbb{E} J_{\xi}\left(r+1, \Gamma_{r+1}\right)-(\xi+ J_{\xi}(\mathbf{0}))\label{eqn:placement_condition_sum_power}
\end{eqnarray*}

Thus, the condition for 
placing a relay when the state in $(r,\gamma)$ becomes $\gamma \leq \gamma_{\textit{th}}(r)$, where $\gamma_{\textit{th}}(r)$ is a 
threshold value. Now, by stochastic monotonicity of $\Gamma_{r}$ in $r$, $\mathbb{E} (\Gamma_{r+1})$ is increasing in $r$. 
Also, since $J_{\xi}(r, \gamma)$ is increasing in $r$, $\gamma$ and $\Gamma_{r}$ is stochastically increasing in $r$,
 $\mathbb{E} J_{\xi}\left(r+1, \Gamma_{r+1}\right)$ also is increasing in $r$. Hence, 
$\gamma_{\textit{th}}(r)$ is increasing in $r$.

\textbf{Proof of Lemma \ref{lemma:convergence_value_iteration_max_power}:}
Here we have an infinite horizon total cost MDP with countable state space and finite action space. The assumption P of 
Chapter $3$ in \cite{bertsekas07dynamic-programming-optimal-control-2} is satisfied since the single-stage cost is nonnegative. 
Hence, by combining Proposition $3.1.5$ and Proposition $3.1.6$ of \cite{bertsekas07dynamic-programming-optimal-control-2}, 
we obtain the result.

\textbf{Proof of Lemma \ref{lem:properties_value_function_max_power}:}
Note that, in the value iteration (\ref{eqn:value_iteration_max_power}), 
$J_{\xi}^{(0)}(r,\gamma,\gamma_{max}):=0$ is increasing in $r$, $\gamma$ and $\gamma_{max}$ and concave, increasing 
in $\xi$. Suppose that for some $k \geq 0$, 
$J_{\xi}^{(k)}(r,\gamma,\gamma_{max})$ is increasing in $r$, $\gamma$, $\gamma_{max}$ and 
concave, increasing in $\xi$. Since $J_{\xi}^{(k)}(r,\gamma,\gamma_{max})$ is increasing in $r$, $\gamma$ and $G_r(\cdot)$ is 
stochastically increasing in $r$, $\mathbb{E}J_{\xi}^{(k)}(r+1,\Gamma_{r+1},\gamma_{max})$ is increasing in $r$. 
$\mathbb{E}J_{\xi}^{(k)}(r+1,\Gamma_{r+1},\gamma_{max})$ is also increasing in $\gamma_{max}$, since 
$J_{\xi}^{(k)}(r,\gamma,\gamma_{max})$ is increasing in $\gamma_{max}$. $\mathbb{E}(\gamma_{max},\Gamma_{r+1})$ is increasing 
in $\gamma_{max}$ and also increasing in $r$ since $G_r$ is stochastically increasing in $r$. On the other hand, the 
first term in the $\min\{\cdot,\cdot\}$ of (\ref{eqn:value_iteration_max_power}) is independent of $r$, but increasing in 
$\gamma$, $\gamma_{max}$ and linearly increasing in $\xi$. Now, minimum of two increasing functions is increasing and 
the minimum of a linear function and a constant is concave. Hence, $J_{\xi}^{(k+1)}(r,\gamma,\gamma_{max})$ 
is increasing in $r$, $\gamma$, $\gamma_{max}$ and concave, increasing in $\xi$. 
Since $J_{\xi}^{(k)}(r,\gamma,\gamma_{max}) \uparrow J_{\xi}(r,\gamma,\gamma_{max})$, the results follow.

\textbf{Proof of Theorem \ref{theorem:policy_structure_max_power}:}
By similar arguments as used in the proof of Theorem \ref{theorem:policy_structure_sum_power}, the condition for placing a 
relay at a state $(r,\gamma,\gamma_{\textit{max}})$ is:

 \footnotesize
\begin{eqnarray}
 \xi+ \theta \mathbb{E} \max\{\gamma, \gamma_{\textit{max}}, \Gamma_{1}\}
+ (1-\theta) \mathbb{E}J_{\xi}(1,\Gamma_{1}, \max\{\gamma, \gamma_{\textit{max}}\})\nonumber \\
\leq 
\theta \mathbb{E} \max \{\gamma_{\textit{max}}, \Gamma_{r+1}\}
+ (1-\theta) \mathbb{E}J_{\xi}(r+1, \Gamma_{r+1}, \gamma_{\textit{max}}) \label{eqn:placement_condition_max_power}
\end{eqnarray}
 \normalsize
Note that $\mathbb{E} \max \{\gamma_{\textit{max}}, \Gamma_{r+1}\}$ increases in $r$. Also, 
$J_{\xi}(r+1, \gamma, \gamma_{\textit{max}})$ is increasing in $r$ and $\gamma$ for all $\gamma_{\textit{max}}$. 
Hence, $\mathbb{E}J_{\xi}(r+1, \Gamma_{r+1}, \gamma_{\textit{max}})$ is increasing in $r$, since $G_{r}$ is 
stochastically increasing in $r$. Hence, the R.H.S of (\ref{eqn:placement_condition_max_power}) 
is increasing in $r$ and $\gamma_{\textit{max}}$. 
Now, if $\gamma \leq \gamma_{\textit{max}}$, the L.H.S of (\ref{eqn:placement_condition_max_power}) is independent of $\gamma$. 
Hence, the condition for placing the relay is $r \geq r_{\textit{th}}(\gamma_{\textit{max}})$ where 
$r_{\textit{th}}(\gamma_{\textit{max}})$ is a threshold value.
On the other hand, if $\gamma > \gamma_{\textit{max}}$, the L.H.S of (\ref{eqn:placement_condition_max_power}) 
is independent of $\gamma_{\textit{max}}$ and increasing in $\gamma$. 
Hence, we will place the relay if and only if  $\gamma \leq \gamma_{\textit{th}}(r, \gamma_{\textit{max}})$ 
where $\gamma_{\textit{th}}(r, \gamma_{\textit{max}})$ is 
a threshold value increasing in $r$ and $\gamma_{\textit{max}}$.

\section{Links between any pair of nodes permitted}\label{appendix:not_only_adjacent_nodes}

\textbf{Proof of Theorem \ref{theorem:policy_structure_sum_power_shortest_path}:}
By similar arguments using convergence of the value iteration as in the proof of Lemma~\ref{lem:properties_value_function_sum_power}, 
we can claim that 
$J_{\xi}( \{y_k\}_{k=1}^{l_m}; \{P^{(k)}\}_{k=1}^{l_m}; \{\gamma^{(k)}\}_{k=1}^{l_m}  )$ is increasing in each of its arguments. 
Now, the condition for placing a relay is that the first term in the $\min \{\cdot,\cdot\}$ of 
the R.H.S of (\ref{eqn:bellman_sum_power_shortest_path}) or (\ref{eqn:bellman_sum_power_shortest_path_less_than_memory}) 
is less than or 
equal to the second term. Since the first term is increasing in 
$\min_{k \in \{1,\cdots,l_m\}} (\gamma^{(k)}+P^{(k)})$, the threshold nature of the optimal relay placement policy is evident.

\bibliographystyle{unsrt}
\bibliography{arpan-techreport}

\end{document}